\documentclass{article}

\usepackage{arxiv}

\usepackage[utf8]{inputenc} 
\usepackage[T1]{fontenc}    
\usepackage{url}            
\usepackage{booktabs}       
\usepackage{amsfonts}       
\usepackage{nicefrac}       
\usepackage{microtype}      
\usepackage{lipsum}

\usepackage{relsize,balance,lipsum,bbm,enumerate,times,comment,color,graphicx,setspace,mathdots,mathrsfs,amssymb,latexsym,amsfonts,amsmath,cite,stmaryrd,caption,pgf,accents,mathtools,tabu,enumitem,hhline,array,epstopdf,nicefrac,amsthm,microtype,algorithmic,array,float,bm,url}

\usepackage{graphicx}

\usepackage{mathtools} 

\usepackage[utf8]{inputenc}
\usepackage[english]{babel}

\usepackage{tikz}
\usetikzlibrary{positioning}
\definecolor{mygreen}{RGB}{153,255,153}
\definecolor{myorange}{RGB}{255,178,102}
\definecolor{myred}{RGB}{255,153,153}
\definecolor{myblue}{RGB}{153,204,255}

\title{On Characterizations of Potential and Ordinal Potential Games}

\author{
    Sina~Arefizadeh*\\
    Dept. of Electrical and Computer Engineering\\
    Arizona State University\\
    Tempe, Arizona, USA\\
    \texttt{sarefiza@asu.edu}
\And
    Angelia~Nedi\'c\\
    Dept. of Electrical and Computer Engineering\\
    Arizona State University\\
    Tempe, Arizona, USA\\
    \texttt{Angelia.Nedich@asu.edu}
\And
    Gautam~Dasarathy\\
    Dept. of Electrical and Computer Engineering\\
    Arizona State University\\
    Tempe, Arizona, USA\\
    \texttt{gautamd@asu.edu}
}

\theoremstyle{definition}
\newtheorem{lemma}{Lemma}
\newtheorem{thm}{Theorem}
\newtheorem{example}{Example}

\newtheorem{assum}{Assumption}
\newtheorem{definition}{Definition}
\newtheorem{rem}{Remark}

\begin{document}
\thanks{Corresponding author}
\thanks{ This work has been supported in part by the National Science Foundation
under Grants CCF 2106336, CNS 2147641, and CCF 2048223.}
\maketitle
\begin{abstract}
This paper investigates some necessary and sufficient conditions for a game to be a potential game.
At first, we extend the classical results of~\cite{slade1994does} and \cite{monderer1996potential} from games with one-dimensional action spaces to games with multi-dimensional action spaces, which require differentiable cost functions.
Then, we provide a necessary and sufficient conditions for a game to have a potential function by investigating the structure of a potential function in terms of the players' cost differences, as opposed to differentials. This condition provides a systematic way for construction of a potential function, which is applied to network congestion games, as an example. Finally, we provide some sufficient conditions for a game to be ordinal potential and generalized ordinal potential.
\end{abstract}

\section{Introduction}
In game theory, the concept of Nash equilibrium has a significant importance~\cite{maschler2020game}. But, not all games have  a Nash equilibrium. For example, the class of harmonic games does not admit any Nash equilibrium in general and some potential components must exist to admit a Nash equilibrium \cite{candogan2011flows}. Confirming the existence of a Nash equilibrium marks the initial stage in a series of studies. Therefore, significant efforts have been made to understand when a Nash equilibrium is guaranteed and achievable~\cite{C1,C20,Arefi0}. 
Potential games are a major class of games in which a Nash equilibrium is guaranteed and achievable under some mild assumptions. In fact, by examining the potential function with compact lower-level sets, it is straightforward to demonstrate the existence of a Nash equilibrium in games with a set of continuous cost functions.
In many applications within the realm of non-cooperative games, it is a non trivial task to verify if the game is potential. Typically, in this class of games, each player's cost function can be influenced by the decisions made by other players. However, in a wide range of applications, these cost functions depend on a certain function of the aggregate decision variables of all players. For example, in the Cournot game, unlike other players' individual supply, each player's cost function depends on the total supply of their opponents. In recent years, this category of games, which is called aggregative games has garnered interest from various fields, including electrical engineering, economics, and transportation science. References \cite{C4,C7,C8,C9,C10,C11,DarenAcemoghlou,Adler} exemplify such studies. 
Not all aggregative games are necessarily potential games. 

The study of characterizing potential games has long been of interest due to the favorable equilibrium property exhibited in this class of games. However, prior to this study, there was no focused effort on understanding the behavior of potential games or generally characterizing them for multidimensional action spaces, regardless of cost continuity and differentiability. This motivates us to investigate various classes of potential games, such as (exact) potential, ordinal potential, and generalized ordinal potential games, and derive some novel characterizations for them. This enables us to describe other classes of games, such as aggregative games, and determine when they are potential games.

In this paper, we characterize different classes of potential games and significantly improve upon the current characterizations existing in the literature. We also propose a systematic framework for constructing the potential function for  a game with not necessarily continuous cost functions in multidimensional action space. While there exists a robust mathematical foundation for analyzing exact one-differential forms in potential games, which describe these games by characterizing them in terms of cost functions, there is still an open question regarding obtaining a generic form for the potential function that holds true regardless of the continuity of the cost functions. Previous work, such as that by Hwang et al. \cite{Hwang}, has introduced integral tests to determine whether a game is potential, but these tests require calculating integrals over the action space, which can be computationally expensive compared to alternative tests that do not involve integration or differentiation.
Furthermore, the study by Ui et al. \cite{Ui} has investigated the relationship between the Shapley value and the potential function, providing a characterization and systematic methods to construct the potential function in potential games using interaction potential functions. However, there are two main challenges with this approach. Firstly, finding transferable utilities that satisfy the conditions of Theorem~3 in Ui et al. \cite{Ui} to derive the interaction potential functions can be challenging. Secondly, even if these transferable utilities are identified, computing the interaction potential functions becomes increasingly computationally expensive with a larger number of players.
Moreover, the study of ordinal potential games may pose a greater level of difficulty, primarily due to the absence of well-established mathematical structures for analyzing these classes of games. To the best of our knowledge, only a few studies have attempted to address the challenge of characterizing smooth ordinal potential games \cite{ewerhart2020ordinal}. The aforementioned paper and its references have derived some necessary conditions by examining the existence of Nash equilibrium within this class of games.
Consequently, the characterization of aggregative potential games has not received as much attention. While the concept of Best Reply Potential Games introduced in \cite{C22} is acknowledged in the publication \cite{C4} to characterize aggregative best reply potential games, this definition is subject to a debate. Indeed, since the best response is dynamic and dynamics often exist independently of games, creating a new class of games based on the best response correspondence may be misleading. Such a game may not necessarily be expressible in terms of cost functions. Furthermore, while the existence of a Best Reply Potential may be interesting for ensuring the convergence of specific dynamics, such as the best response or better response to a Nash equilibrium, it may not be practical in real-world scenarios where understanding the form of this function is crucial for determining the game's Nash equilibrium. Similarly, it may not be useful for providing convergence analysis for dynamics other than the better or best response.

This paper has the following main contributions:
\begin{itemize}
    \item Firstly, we provide a comprehensive characterization of potential games without the need for integration or differentiation, unlike the existing approaches. We also simplify these characterizations for potential aggregative games. To achieve this, we initially address the challenge of describing potential games in multidimensional cases, aiming to understand the behavior of the potential function in such scenarios. It is important to note that the conditions derived for potential games in terms of pairs of cost functions, as presented in \cite{slade1994does} and \cite{monderer1996potential}, are applicable only to one-dimensional cases, where one differential form corresponds to scalar variables. 
    \item Secondly,  based on the characterization criteria for a potential game, we develop a systematic method to construct a potential function. Furthermore, we apply this systematic construction of potential functions to network congestion game, extending the related results for this game class.
    \item Thirdly, we study ordinal potential games and derive sufficient conditions for a game to be considered ordinal potential.
    \item Finally, we establish some sufficient conditions, in terms of cost functions, for the class of strong and strictly convex games to be generalized ordinal potential.
\end{itemize}

Preliminary versions of Theorem~\ref{nec-suff-cond-2-agg-pot}, Theorem~\ref{theorem 7}, and Theorem~\ref{theorem 8} (specifically, item (a) of each)  have been published in~\cite{Arefi}. In this current paper, these theorems are extended (adding item (b) for each in this study) and improved. In addition, we have new results on the use of these theorems for a systematic construction of potential functions, which is demonstrated in a network congestion game. We also have new results (Theorem~\ref{generic-suff-theo}, Theorem~\ref{final-the}, Theorem~\ref{final-the-one}, and Theorem~\ref{final-the-two}) for ordinal potential and generalized ordinal potential games, which have not been published elsewhere. 

This paper is organized as follows. Section \ref{Se_Nom} introduces the notation and terminologies.  Section \ref{Prelim} provides preliminary concepts regarding potential games in multi-dimensional action spaces, which are used in Section \ref{Main-res} to provide a characterization of potential games and, also, apply the results developed to aggregative games and network congestion games. In Section~\ref{Main-res-2}, sufficient conditions for the characterization of ordinal and generalized ordinal potential games are developed relying on strongly and strictly convex cost functions. Finally, Section \ref{sec:conclusion} concludes the paper.

\section{Notation and Terminology} \label{Se_Nom}
In this section, we provide some definitions and terminologies for the games that will be used in the rest of the manuscript. A game consists of $N$ players represented by the set $\mathcal{N}:=\{1,\dots,N\}$. Each player $i\in\mathcal{N}$ selects an action $x_i$ from a strategy set $K_i \in \mathbb{R}^{n_i}$ to minimize its cost function $f_i:K_i \times K_{-i} \to \mathbb{R}$ where $K_{-i}:= \prod_{j \neq i} K_j$. 
Specifically, each player $i\in\mathcal{N}$ wants to solve the following problem
\begin{eqnarray}\label{min-game}
    &&\min f_i(x_i,x_{-i}), \nonumber\\ 
    && s.t.\ \ \  x_i \in K_i.
\end{eqnarray}
We will use $x$ to denote the vector of joint actions $x_i$, $i\in \mathcal{N}$, i.e., $x=(x_1,\ldots,x_N)$. Given a player set $S\subset\mathcal{N}$, we write $x_{-S}$ to denote the set of decisions of players that are not in the set $S$, i.e., $x_{-S}=(x_i,i\notin S)$. In particular,
for each player $i$, we write $x_{-i}$ to denote the vector consisting of all players' actions except for player $i$, i.e., $x_{-i}=(x_1,\ldots,x_{i-1},x_{i+1},\ldots,x_N)$.
Similarly, we define the set $K_{-S}$ for $S\subset\mathcal{N}$, i.e.,
$K_{-S}=\prod_{j \notin S} K_j$.
A game $\Gamma$ is a tuple of the player set $\mathcal{N}$, the joint action set $K:=\prod_{i \in \mathcal{N}} K_i$, and the collection of cost functions $f_i(\cdot)$, $i \in \mathcal{N}$.
We will consider games with special structures such as potential, ordinal potential, etc., which are defined as follows.

\begin{definition}[Potential Game] \label{def_potential}
A game $\Gamma$ is a potential game if there exists a function $\phi: K \rightarrow \mathbb{R}$ such that the following relation holds for all players $i \in \mathcal{N}$,
and for all $x_i' \in K_i$, $x_i\in K_i$, and $x_{-i} \in K_{-i}$,
\begin{equation}\label{eq_def_potential}
    f_i(x'_i,x_{-i})-f_i(x_i,x_{-i})=\phi(x'_i,x_{-i})-\phi(x_i,x_{-i}).
\end{equation}
\end{definition}

An ordinal potential game is defined as follows.
\begin{definition}[Ordinal Potential Game] \label{def_Best_potential}
A game $\Gamma$ is an ordinal potential game if there exists a function $\phi: K \rightarrow \mathbb{R}$ such that the following relation holds for all players $i \in \mathcal{N}$,
and for all $x_i' \in K_i$, $x_i\in K_i$, and $x_{-i} \in K_{-i}$,
\begin{equation}\label{eq_def-ord_potential}
    f_i(x'_i,x_{-i})-f_i(x_i,x_{-i})<0 \Leftrightarrow \phi(x'_i,x_{-i})-\phi(x_i,x_{-i})<0.
\end{equation}
\end{definition}

A more relaxed version of an ordinal potential game is a generalized ordinal potential game, 
defined as follows.
\begin{definition}[Generalized Ordinal Potential Game] \label{def_Gen_ord_potential}
A game $\Gamma$ is a generalized ordinal potential game if there exists a function $\phi: K \rightarrow \mathbb{R}$ such that the following relation holds for all players $i \in \mathcal{N}$,
and 
for all $x_i' \in K_i$, $x_i\in K_i$, and $x_{-i} \in K_{-i}$,
\begin{equation}\label{eq_def-ord_potential-2}
    f_i(x'_i,x_{-i})-f_i(x_i,x_{-i})<0 \Rightarrow \phi(x'_i,x_{-i})-\phi(x_i,x_{-i})<0.
\end{equation}
\end{definition}
The function $\phi(\cdot)$ in Definitions~\ref{def_potential}--\ref{def_Gen_ord_potential} is referred to, respectively, as a potential, ordinal potential, and generalized ordinal potential function of the game $\Gamma$.

Finally, we introduce an aggregative game. In this type of a game, all players' action sets are of the same dimension,
i.e., $K_i\subset \mathbb{R}^n$ for all $i\in\mathcal{N}$.
We define $\bar{K}$ as the Minkowski sum of the sets $K_i$, i.e.,
\begin{equation}\label{k_bar}
    \bar{K}\triangleq \sum_{i=1}^{N}K_i = \left\{ \sum_{i=1}^Nx_i \mid x_i\in K_i \mbox{ for all } i\in \mathcal{N} \right\}.
\end{equation}
We let $\bar{x}$ be the aggregate of players decisions $x_i$, i.e., 
\begin{equation}\label{x_bar}
    \bar{x}\triangleq \sum_{j=1}^N x_j=x_i+\bar{x}_{-i},\qquad \bar{x}\in \bar{K},
\end{equation}
where $\bar{x}_{-i}$ denotes the aggregate of all players' decisions except for player $i$, i.e., for all $i\in\mathcal{N}$,
$$\bar{x}_{-i}=\sum_{j=1,j\ne i}^N x_j.$$
Each player $i$ is confronted with the following optimization problem:
\begin{eqnarray}\label{game}
    &&\min f_i(x_i,x_{-i})\triangleq \tilde{f}_i\big(x_i,g_i(\bar{x})\big), \nonumber\\ 
    && s.t.\ \ \  x_i \in K_i,
\end{eqnarray}
where $g_i:\bar{K}\rightarrow \mathbb{R}^{m_i}$, with $m_i\ge 1$, is some mapping. Now, we are in position to specify an aggregative game.
\begin{definition}[Aggregative Game \cite{C4}] \label{aggregative_games}
A game $\Gamma=(\mathcal{N},\{\tilde{f}_i,K_i\}_{i\in\mathcal{N}})$, with the strategy sets $K_i\subset\mathbb{R}^n$ and the cost functions $\tilde{f}_i$ as in~\eqref{game}, is an aggregative game.
\end{definition}
Basically, in an aggregative game, a cost function of each player depends on its own strategy $x_i$ and the aggregate $\bar x$ of the strategies of all players.

\section{Preliminaries}\label{Prelim}
\def\la{\langle}
\def\ra{\rangle}
The characterization of potential games in~\cite{slade1994does} and \cite{monderer1996potential} is based on one-differential forms \cite{Rudin-Analysis}, so it is limited to one-dimensional action spaces (i.e., $K_i\subset\mathbb{R}$ for all $i$). In this section, we focus on the class of potential games for the multidimensional case, $K_i\subset\mathbb{R}^{n_i}$. The paper ~\cite{deb2008interdependent} extends~\cite{monderer1996potential} results for games with smooth cost functions from one dimensional to multiple dimensional action space. However, in the reminder of this section we will cover similar results to \cite{deb2008interdependent} through a different approach.

 
 One-form $w_\mathbf{a}(v)$, at an arbitrary point $\mathbf{a}\in\mathbb{R}^n$, is  a linear functional on the space of tangent vectors $v$ at the point  $\mathbf{a}=(a_1,a_2,\ldots,a_n)$.
 Considering the tangent vector $d\mathbf{a}$ at point $\mathbf{a}$, there is a unique function $F:\mathbb{R}^n\rightarrow\mathbb{R}^n$ such that
\begin{equation}\label{one-form}
    w_\mathbf{a}(d\mathbf{a})=\la F(\mathbf{a}),d\mathbf{a}\ra,
\end{equation}
where $\la\cdot,\cdot\ra$ denotes the inner product in $\mathbb{R}^n$.

The following theorem gives a condition for a one-differential form to be exact over a convex set in $\mathbb{R}^n$.
\begin{thm}\label{exact-form}
Let $u_i(\cdot)$ be a continuously differentiable function  on a convex set $E\subset \mathbb{R}^n $,
for all $i=1,2,\ldots,n$, for some $n\ge1$.
Then, the one-form $\omega_\mathbf{a}(d\mathbf{a})=\sum_{i=1}^{n} u_i(\mathbf{a})da_i$, with $\mathbf{a}\in E$, is exact on the set $E$ if and only if we have for all $i,j \in \{1,2,\ldots,n\}$,
\begin{equation}\label{nec-suf-exact}
    \frac{\partial u_i(\mathbf{a})}{\partial a_j}=\frac{\partial u_j(\mathbf{a})}{\partial a_i}\qquad\hbox{for all }\mathbf{a}\in E.
\end{equation}
\end{thm}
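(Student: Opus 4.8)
The plan is to prove the two implications separately: the ``only if'' direction is the elementary observation that the potential, being twice continuously differentiable under the hypotheses, has symmetric mixed second partials; the ``if'' direction is the substantive part and I would handle it by writing down an explicit line-integral candidate for the potential and verifying it, which is where the convexity of $E$ and the symmetry condition get used.

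For necessity, suppose $\omega$ is exact, so there is a function $\Phi\colon E\to\mathbb{R}$ with $\partial\Phi/\partial a_i=u_i$ on $E$ for every $i$. Since each $u_i$ is $C^1$ by hypothesis, all first partials of $\Phi$ are $C^1$, so $\Phi$ is $C^2$; hence for any $i,j$ the mixed partials satisfy $\partial u_i/\partial a_j=\partial^2\Phi/(\partial a_j\,\partial a_i)=\partial^2\Phi/(\partial a_i\,\partial a_j)=\partial u_j/\partial a_i$ by the Clairaut--Schwarz theorem, which is exactly \eqref{nec-suf-exact}.

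For sufficiency, I would fix a base point $\mathbf{a}_0\in E$; by convexity of $E$, the segment $\gamma(t)=\mathbf{a}_0+t(\mathbf{a}-\mathbf{a}_0)$, $t\in[0,1]$, lies in $E$ for every $\mathbf{a}\in E$, so with $F=(u_1,\dots,u_n)$ the function
\begin{equation}
\Phi(\mathbf{a})=\int_0^1\langle F(\mathbf{a}_0+t(\mathbf{a}-\mathbf{a}_0)),\,\mathbf{a}-\mathbf{a}_0\rangle\,dt=\int_0^1\sum_{i=1}^n u_i(\gamma(t))\,(a_i-a_{0,i})\,dt
\end{equation}
is well defined on $E$. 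The key computation is to differentiate under the integral sign with respect to $a_k$ (legitimate because each $u_i$ is $C^1$ and the integration variable ranges over the compact interval $[0,1]$), using the chain rule $\partial_k[u_i(\gamma(t))]=t\,\partial u_i/\partial a_k(\gamma(t))$, which gives
\begin{equation}
\frac{\partial\Phi}{\partial a_k}(\mathbf{a})=\int_0^1\Big(u_k(\gamma(t))+t\sum_{i=1}^n\frac{\partial u_i}{\partial a_k}(\gamma(t))\,(a_i-a_{0,i})\Big)\,dt.
\end{equation}
Now I would invoke \eqref{nec-suf-exact} to replace $\partial u_i/\partial a_k$ by $\partial u_k/\partial a_i$; the sum then equals $\frac{d}{dt}u_k(\gamma(t))$ by the chain rule, so the integrand becomes $\frac{d}{dt}\big(t\,u_k(\gamma(t))\big)$, and the fundamental theorem of calculus yields $\partial\Phi/\partial a_k(\mathbf{a})=u_k(\mathbf{a})$ for every $k$. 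Thus $d\Phi=\omega$ on $E$, i.e., $\omega$ is exact.

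The main (really the only nonroutine) obstacle is justifying the interchange of differentiation and integration and recognizing that the symmetry hypothesis is precisely what collapses the integrand into a perfect $t$-derivative; the convexity of $E$ enters only to keep $\gamma(t)$ inside $E$ so that $\Phi$ is defined, and the rest is bookkeeping. I would also add a brief remark that, for the partial derivatives in the statement to be meaningful, one implicitly takes $E$ with nonempty interior (or interprets derivatives relative to $E$), but this does not affect the argument.
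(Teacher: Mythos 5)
Your proof is correct, and it is essentially the argument the paper relies on: the paper's ``proof'' is just a citation to Remark 10.35(a) and Theorem 10.39 of Rudin's \emph{Principles of Mathematical Analysis}, whose content is exactly your two steps (Clairaut symmetry for necessity, and the line-integral potential $\Phi(\mathbf{a})=\int_0^1\langle F(\mathbf{a}_0+t(\mathbf{a}-\mathbf{a}_0)),\mathbf{a}-\mathbf{a}_0\rangle\,dt$ with the perfect-derivative collapse for sufficiency). Your closing caveat about $E$ needing nonempty interior (Rudin assumes $E$ open, the paper does not) is a fair and worthwhile observation, but otherwise there is nothing to add.
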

\begin{proof}
The proof follows immediately from Remark 10.35(a) and Theorem 10.39 of ~\cite{Rudin-Analysis}.
\end{proof}
Using Theorem~\ref{exact-form}, a vector field $F:\mathbb{R}^I\rightarrow\mathbb{R}^I$ we can uniquely define one-form as given in \eqref{one-form}. For an arbitrary game,
one can introduce a one-form for each by viewing the concatenation of derivatives of the cost function of each player with respect to its own decision variable. 
For example, for a game $\Gamma=(\mathcal{N},\{K_i,f_i\}_{i\in\mathcal{N}})$, with one dimensional action space, one can consider the vector field 
$$G(x)=\left(\frac{\partial f_1}{\partial x_1},\frac{\partial f_2}{\partial x_2},\ldots,\frac{\partial f_N}{\partial x_N}\right),$$
and assign one-form $\langle G(x), dx \rangle$ to this game $\Gamma$. Based on the definition of the exact one-form, if the one-form corresponding to this game is exact, the game is potential since we can write
$$\langle G(x), dx \rangle= \sum_{i=1}^N \frac{\partial \phi(x)}{\partial x_i}dx_i$$ for some function $\phi:K\rightarrow\mathbb{R}$, with $K\subset\mathbb{R}^N$.

Let $\bar{n}=\sum_{i=1}^{N}n_i$, where $n_i$ is the dimension of the action variable for player $i$, i.e., $K_i\subset \mathbb{R}^{n_i}$. We define one-differential forms on $K$.
To do so, for each player $i\in \mathcal{N}$, we consider  a curve $x_i(\cdot):\mathbb{R}\to K$, which is continuously differentiable and given by 
\begin{align*}
    x_i(t)= &\big(x_{11}(t_0),\ldots,x_{1n_1}(t_0),\ldots,x_{i1}(t),\ldots,\nonumber \\
    &x_{in_i}(t),\ldots,x_{N1}(t_0),\ldots,x_{Nn_N}(t_0)\big)\in \mathbb{R}^{\bar{n}},
\end{align*}
for some $t_0\in \mathbb{R}$. Note that, for all $i,j\in \mathcal{N}$ we have $x_i(t_0)=x_j(t_0)$. 
We next describe potential games in a parametric sense, as follows:
A game $\Gamma$ is potential if there is a scalar function $\phi$ such that
\begin{equation}\label{eq-MDAS-pot}
    \frac{df_i\big(x_i(t)\big)}{dt}=\frac{d\phi\big(x_i(t)\big)}{dt}
\end{equation}
for arbitrary curves $x_i(\cdot)$ in the set $K$ and for all $i\in \mathcal{N}$. 
The preceding relation describes a parametric version of a potential game, and 
we refer to $\phi$ as a potential function of the game. 
\begin{lemma}\label{lem1}
Consider the game $\Gamma=(\mathcal{N},\{f_i,K_i\}_{i\in\mathcal{N}})$ where each function $f_i$ is continuously differentiable over an open set containing the joint action set $K$. Then, the game is a potential game if and only if there exists a function $\phi:K\to\mathbb{R}$
such that $\frac{\partial f_i(x)}{\partial x_{im}}=\frac{\partial \phi(x)}{\partial x_{im}}$
for all $x\in K$, and for all $i\in \mathcal{N}$ and $m \in \{1,2,\ldots,n_i\}$.
\end{lemma}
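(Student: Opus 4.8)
My plan is to prove both implications by localizing the potential condition to one player at a time and reducing it to the fundamental theorem of calculus (the gradient theorem) applied to the coordinate block $x_i=(x_{i1},\dots,x_{in_i})$ of player $i$, with the remaining coordinates $x_{-i}$ held fixed. The bridge between Definition~\ref{def_potential} and the partial-derivative identity is the elementary fact that, for a fixed $x_{-i}$, the map $x_i\mapsto f_i(x_i,x_{-i})-\phi(x_i,x_{-i})$ having vanishing $x_i$-gradient is the infinitesimal form of its being constant in $x_i$.

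For necessity, I would start from a potential function $\phi$ and read \eqref{eq_def_potential} as the statement that, for each $i\in\mathcal N$ and each $x_{-i}$, the difference $f_i(x_i,x_{-i})-\phi(x_i,x_{-i})$ does not depend on $x_i$; equivalently $\phi(x_i,x_{-i})=f_i(x_i,x_{-i})+c_i(x_{-i})$ for some function $c_i$ of $x_{-i}$ alone (take $c_i(x_{-i})=\phi(\hat x_i,x_{-i})-f_i(\hat x_i,x_{-i})$ for a fixed reference $\hat x_i\in K_i$). Since $f_i$ is $C^1$ on an open set containing $K$ and $c_i$ is independent of $x_i$, $\phi(\cdot,x_{-i})$ is differentiable and $\partial\phi/\partial x_{im}=\partial f_i/\partial x_{im}$ for every $m$; letting $i$ range over $\mathcal N$ finishes this direction (and incidentally shows $\phi$ is differentiable in every coordinate).

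For sufficiency, I would fix $i$, $x_{-i}\in K_{-i}$, and $x_i,x_i'\in K_i$, join $x_i$ to $x_i'$ by the segment (or any $C^1$ curve) $\gamma\subset K_i$, and set $\psi(t)=f_i(\gamma(t),x_{-i})-\phi(\gamma(t),x_{-i})$. The hypothesis gives $\psi'(t)=\sum_{m=1}^{n_i}\big(\partial f_i/\partial x_{im}-\partial\phi/\partial x_{im}\big)(\gamma(t),x_{-i})\,\dot\gamma_m(t)=0$, so $\psi$ is constant and $\psi(1)=\psi(0)$ is exactly \eqref{eq_def_potential}; since $i$, $x_{-i}$, $x_i$, $x_i'$ are arbitrary, $\phi$ is a potential function. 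Before running this I would note that the hypothesis already forces $\phi\in C^1$: each $\partial\phi/\partial x_{im}$ equals the continuous function $\partial f_i/\partial x_{im}$, so all partials of $\phi$ are continuous, which legitimizes the chain rule applied to $\psi$.

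The step I expect to require the most care is not the computation but the standing regularity and topology assumptions: the implication ``zero $x_i$-gradient on $K_i$ $\Rightarrow$ constant in $x_i$'' needs a $C^1$ path inside $K_i$ joining the two points, which is immediate when the $K_i$ are convex (the setting of Theorem~\ref{exact-form} and of the curve-based description~\eqref{eq-MDAS-pot}) and in general needs path-connectedness of $K_i$ within the open differentiability domain. An equivalent, arguably cleaner packaging is to route everything through~\eqref{eq-MDAS-pot}: equating $\tfrac{d}{dt}f_i(x_i(t))=\sum_m\tfrac{\partial f_i}{\partial x_{im}}\dot x_{im}(t)$ with $\tfrac{d}{dt}\phi(x_i(t))=\sum_m\tfrac{\partial\phi}{\partial x_{im}}\dot x_{im}(t)$ along all admissible curves is, by the chain rule and the freedom in choosing $\dot x_i(t)$, equivalent to the pointwise equality of the partials, while integrating along a curve recovers Definition~\ref{def_potential}.
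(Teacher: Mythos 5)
Your proof is correct, and it reaches the same partial-derivative identity by essentially the same mechanism (fix all but one coordinate block, apply the chain rule), but it is anchored to a different formalization of ``potential game'' than the paper's proof, and the difference is worth noting. The paper proves Lemma~\ref{lem1} entirely relative to the parametric identity~\eqref{eq-MDAS-pot}: necessity comes from choosing curves that vary only the single coordinate $x_{im}$, and sufficiency is just the chain rule, so the paper never explicitly returns to the finite-difference Definition~\ref{def_potential} inside this proof (that bridge is deferred to the Stokes-type integration in~\eqref{char1-pot}). You instead work directly with Definition~\ref{def_potential}: for necessity you observe that \eqref{eq_def_potential} says $f_i(\cdot,x_{-i})-\phi(\cdot,x_{-i})$ is constant in $x_i$ and differentiate, needing no curves at all; for sufficiency you integrate $\psi(t)=f_i(\gamma(t),x_{-i})-\phi(\gamma(t),x_{-i})$ along a $C^1$ path in $K_i$ to recover the finite difference. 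Your route buys two things the paper's version leaves implicit: (i) it makes the lemma a statement about the original Definition~\ref{def_potential} rather than its parametric surrogate, and (ii) it surfaces the hypothesis actually needed for the sufficiency direction, namely that any two points of $K_i$ can be joined by a $C^1$ path inside the differentiability domain (automatic when $K_i$ is convex, as in Theorem~\ref{exact-form}, but a genuine assumption in general). Your remark that the hypothesis forces the relevant partials of $\phi$ to be continuous, legitimizing the chain rule, is also a point the paper glosses over. The paper's version is shorter because it treats \eqref{eq-MDAS-pot} as the definition; yours is the more self-contained argument.
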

\begin{proof}
If $\Gamma$ is a potential game, defining $x_{il}(t)=x_{il}(t_0)$ for all $l\in \{1,2,\ldots,n_i\}\setminus\{m\}$, according to \eqref{eq-MDAS-pot}, we have $\frac{\partial f_i}{\partial x_{im}}dx_{im}=\frac{\partial \phi}{\partial x_{im}}dx_{im}$. Since $x_{im}$ is selected arbitrarily, we can conclude that $\frac{\partial f_i}{\partial x_{im}}=\frac{\partial \phi}{\partial x_{im}}$. 
For the converse statement, since 
$\frac{\partial f_i(x)}{\partial x_{im}}=\frac{\partial \phi(x)}{\partial x_{im}}$ for all $x\in K$ and 
for all $i\in \mathcal{N}$ and $m \in \{1,2,\ldots,n_i\}$, by the differentiation rule it follows that
\begin{align*}
\frac{df_i\big(x_i(t)\big)}{dt} 
&=\sum_{m=1}^{n_i}\frac{\partial f_i}{\partial x_{im}}\cdot\frac{dx_{im}}{dt} \cr
&=\sum_{m=1}^{n_i}\frac{\partial \phi}{\partial x_{im}}\cdot\frac{dx_{im}}{dt}\cr 
&=\frac{d\phi\big(x_i(t)\big)}{dt}.
\end{align*}
\end{proof}

By Theorem~\ref{exact-form} and Lemma~\ref{lem1}, we have the following result.
\begin{thm}\label{Th-MDASPG}
Consider the game $\Gamma=(\mathcal{N},\{f_i,K_i\}_{i\in\mathcal{N}})$.
\begin{itemize}
    \item[(a)] Suppose that each cost function $f_i:K\to\mathbb{R}$ is continuously differentiable over an open set containing the set $K\subset\mathbb{R}^{\bar n}$. Then, the game $\Gamma$ is a potential game if and only if there exists function $\phi:\mathbb{R}^{\bar n}\rightarrow \mathbb{R}$ such that 
    or all $i,j\in\mathcal{N}$,
    \[\nabla_{x_i} f_i(x)=\nabla_{x_i}\phi(x)\qquad\hbox{for all }x\in K,\]
    where $\nabla_{x_i}$ is the partial gradient with respect to $x_i$.
    \item[(b)] Suppose that each cost function $f_i$ is twice continuously differentiable over an open set containing the set $K$ and the set $K$ is convex. Then, the game $\Gamma$ is a potential game if and only if 
    for all $i,j\in\mathcal{N}$,
    \[\nabla^2_{x_i,x_j} f_i(x)=\nabla^2_{x_j,x_i}f_j(x)\qquad\hbox{for all }x\in K,\]
    where $\nabla^2_{x_i,x_j}$ is the  partial Hessian with respect to $x_i,x_j$.
\end{itemize}
\end{thm}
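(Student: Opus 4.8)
The plan is to treat the two parts separately, building both on the already-established Lemma~\ref{lem1} and Theorem~\ref{exact-form}. For part (a), the work is essentially a repackaging of Lemma~\ref{lem1}: that lemma gives, component-by-component, the identity $\partial f_i(x)/\partial x_{im} = \partial\phi(x)/\partial x_{im}$ for every player $i$ and coordinate $m\in\{1,\dots,n_i\}$, and collecting these $n_i$ scalar equations into a single vector equation yields exactly $\nabla_{x_i}f_i(x)=\nabla_{x_i}\phi(x)$. The converse direction is the same collapse read backwards. So for (a) I would simply invoke Lemma~\ref{lem1} and note that the stated gradient condition is its coordinatewise restatement; the only thing to be careful about is the domain of $\phi$ (stated as $\mathbb{R}^{\bar n}$ versus $K$), which is harmless since only the values of $\phi$ on $K$ and its partials there enter, and one may extend arbitrarily.

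For part (b), the strategy is to use part (a) as the bridge: the game is potential iff there is a $\phi$ with $\nabla_{x_i}f_i(x)=\nabla_{x_i}\phi(x)$ for all $i$, which says precisely that the vector field $G(x)=\big(\nabla_{x_1}f_1(x),\dots,\nabla_{x_N}f_N(x)\big)$ on $K\subset\mathbb{R}^{\bar n}$ is a gradient field, i.e., the associated one-form $\langle G(x),dx\rangle$ is exact on the convex set $K$. Now apply Theorem~\ref{exact-form} with $n=\bar n$: writing the components of $G$ as $u$'s indexed by pairs $(i,m)$, exactness is equivalent to the symmetry of the Jacobian of $G$, i.e., $\partial u_{(i,m)}/\partial x_{jl} = \partial u_{(j,l)}/\partial x_{im}$ for all index pairs. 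Since $u_{(i,m)}=\partial f_i/\partial x_{im}$, this reads $\partial^2 f_i/\partial x_{im}\partial x_{jl} = \partial^2 f_j/\partial x_{jl}\partial x_{im}$ for all $i,j$ and all $m\le n_i$, $l\le n_j$, which is exactly the block-Hessian identity $\nabla^2_{x_i,x_j}f_i(x)=\nabla^2_{x_j,x_i}f_j(x)$ once one assembles the scalar second-order equations into matrices. The hypotheses line up: twice continuous differentiability of each $f_i$ makes each $u_{(i,m)}$ continuously differentiable, and convexity of $K$ is exactly what Theorem~\ref{exact-form} requires.

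The one point that needs a word of care, and which I expect to be the main (mild) obstacle, is the bookkeeping for the diagonal blocks and the within-player symmetry. When $i=j$, the condition $\nabla^2_{x_i,x_i}f_i=\nabla^2_{x_i,x_i}f_i$ is the symmetry of the Hessian block of $f_i$ in its own variables, which is automatic from twice continuous differentiability (Clairaut/Schwarz) and so imposes nothing; one should note this so the statement is not misread as an extra hypothesis. Conversely, Theorem~\ref{exact-form} does demand the equations for \emph{all} index pairs including those with $i=j$ and $m\ne l$, so one must observe that those are supplied for free by Schwarz's theorem and hence do not weaken the equivalence. Apart from this, the argument is a direct chain: part~(a) $\Rightarrow$ ``$G$ is a gradient field on $K$'' $\Rightarrow$ (by Theorem~\ref{exact-form}) symmetry of $DG$ $\Rightarrow$ the stated Hessian identity, with every implication reversible.
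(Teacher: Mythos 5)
Your proposal is correct and follows essentially the same route as the paper: part (a) is read off from Lemma~\ref{lem1}, and part (b) applies Theorem~\ref{exact-form} to the one-form built from the partial gradients $\nabla_{x_i} f_i$ on the convex set $K$, translating exactness into the block-Hessian symmetry. Your added observation that the diagonal ($i=j$) conditions are supplied for free by Schwarz's theorem is a detail the paper leaves implicit, but it does not change the argument.
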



\begin{proof}
Part (a) follows from Lemma~\ref{lem1}.
For part (b), consider a joint strategy 
$x=(x_1,\ldots,x_N)$, which we relabel
for convenience, as follows:
\[x=(y_1,y_2,\ldots,y_{\bar n})=\mathbf{y}.\]
Now, we consider the differential one-form
\[
w_\mathbf{y}(d\mathbf{y})=\sum_{s=1}^{\bar{n}} \frac{\partial u_s(\mathbf{y})}{\partial y_{s}}dy_{s}\qquad\hbox{for }\mathbf{y}\in K,\]
where $u_s=f_i$ for $s=n_{i-1}+1,\ldots, n_i$
and for all $i\in \mathcal{N}$. 
According to (\ref{nec-suf-exact}), where $E=K$, the differential one-form $w_\mathbf{y}$ is exact if and only if 
for all $l,d \in \{1,2,\ldots,\bar{n}\}$,
\[\frac{\partial}{\partial y_{l}}\left( \frac{\partial u_d(\mathbf{y})}{\partial y_{d}}\right)=\frac{\partial}{\partial y_{d}} \left(\frac{\partial u_l(\mathbf{y})}{\partial y_{l}}\right)\qquad\hbox{for all }\mathbf{y}\in K.\]
$u_s=f_i$ for $s=n_{i-1}+1,\ldots, n_i$
and for all $i\in \mathcal{N}$. Since $x=\mathbf{y}$, it follows that the game $\Gamma$ is potential if and only if 
$$ \frac{\partial^2 f_i(x)}{\partial x_{ip}\partial x_{jq}}
=\frac{\partial^2 f_j(x)}{\partial x_{jq}\partial x_{ip}}\qquad\hbox{for }x\in K,$$
for all $i,j \in \mathcal{N}$, $p\in \{1,2,\ldots,n_i\}$, and $q \in \{1,2,\ldots,n_j\}$. 
\end{proof}

\section{Characterization of Potential Games}\label{Main-res}
In this section, we provide some necessary and sufficient conditions for a game to have a potential function. We do so by investigating the structure of a potential function in terms of differences, as opposed to differentials. 

\subsection{Necessary Condition}
We start by considering the potential functions from the parametric sense. 
According to \eqref{eq-MDAS-pot}, for every $i\in\mathcal{N}$, by integrating both sides of the equation and using the Stokes theorem, we can write
\begin{equation}
\label{char1-pot}
   \phi\big(x_i(t_0+\epsilon)\big)-\phi\big(x_i(t_0)\big)=f_i\big(x_i(t_0+\epsilon)\big)-f_i\big(x_i(t_0)\big),
\end{equation}
where $\epsilon$ is any value for which the curve $x_i(t)$ stays within the set $K$.
In the non-parametric form, relation~\eqref{char1-pot} can alternatively be stated as follows: for every $z\in K$, every $y_i\in K$, with $z_i+y_i\in K$, and for every $i\in\mathcal{N}$,
\begin{equation}
\label{char1-pot-non-para}
   \phi(z_i+y_i,z_{-i})-\phi(z_i,z_{-i})=f_i(z_i+y_i,z_{-i})-f_i(z_i,z_{-i}).
\end{equation}
The preceding relation for a potential function does not require differentiability or continuity of the players' cost functions.

The following theorem provides a necessary condition for the form of a potential function when a game is potential.

\begin{thm}\label{theorem-3}
Let a game $\Gamma=(\mathcal{N},\{f_i,K_i\}_{i\in\mathcal{N}})$ be a potential game. Then, any potential function $\phi$ of the game satisfies the following relation:
for every $z\in K$ and $z+y\in K$,
\begin{align}\label{char2-pot-non-para}
    \phi(z+y)-\phi(z)&=\sum_{i=1}^N\big(f_i(z_1+y_1,\ldots,z_i+y_i,z_{i+1},\ldots,z_N)\nonumber \\
    &-f_i(z_1+y_1,\ldots,z_{i-1}+y_{i-1},z_{i},\ldots,z_N)\big).
\end{align}
\end{thm}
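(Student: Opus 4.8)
The plan is to build the equality in~\eqref{char2-pot-non-para} by a telescoping argument, changing one player's block of coordinates at a time and invoking the single-player difference identity~\eqref{char1-pot-non-para} at each step. Concretely, I would define the intermediate points
\[
w^{(k)}=(z_1+y_1,\ldots,z_k+y_k,z_{k+1},\ldots,z_N),\qquad k=0,1,\ldots,N,
\]
so that $w^{(0)}=z$ and $w^{(N)}=z+y$, and each $w^{(k)}$ lies in $K$ because $K=\prod_i K_i$ and $z_i,z_i+y_i\in K_i$ for all $i$. Then I would write the telescoping sum
\[
\phi(z+y)-\phi(z)=\sum_{k=1}^N\big(\phi(w^{(k)})-\phi(w^{(k-1)})\big).
\]

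The key observation is that $w^{(k)}$ and $w^{(k-1)}$ differ only in the $k$-th coordinate block: $w^{(k-1)}$ has $z_k$ in that slot while $w^{(k)}$ has $z_k+y_k$, and all other coordinates agree. Hence applying~\eqref{char1-pot-non-para} with player index $i=k$ and base point whose $k$-th block is $z_k$ and whose remaining blocks are exactly those shared coordinates, we get
\[
\phi(w^{(k)})-\phi(w^{(k-1)})=f_k(w^{(k)})-f_k(w^{(k-1)}),
\]
which is precisely the $k$-th summand on the right-hand side of~\eqref{char2-pot-non-para}. Summing over $k$ and comparing with the telescoping sum yields the claim. One should note that~\eqref{char1-pot-non-para} was derived (via Stokes / integration of~\eqref{eq-MDAS-pot}) only for differentiable cost functions, but since $\Gamma$ is potential, the difference characterization~\eqref{eq_def_potential} from Definition~\ref{def_potential} directly gives the same single-player identity with no regularity hypothesis, so I would cite that instead to keep the theorem assumption-free.

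**The main obstacle** is largely bookkeeping: one must be careful that at step $k$ all coordinates of $w^{(k)}$ and $w^{(k-1)}$ other than the $k$-th block genuinely coincide (the first $k-1$ blocks are the "perturbed" $z_j+y_j$ and the last $N-k$ blocks are the "unperturbed" $z_j$), and that the base point fed into~\eqref{eq_def_potential} has its $-k$ part matching this common vector $(z_1+y_1,\ldots,z_{k-1}+y_{k-1},z_{k+1},\ldots,z_N)$; this is exactly the asymmetric pattern appearing in~\eqref{char2-pot-non-para}, where the $i$-th term evaluates $f_i$ with the first $i$ blocks perturbed versus the first $i-1$ blocks perturbed. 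Everything else is immediate, and no convexity or continuity of $K$ or the $f_i$ is needed.
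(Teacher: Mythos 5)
Your proposal is correct and is essentially the paper's own argument: the paper also walks the path $z\to(z_1+y_1,z_{-1})\to\cdots\to z+y$, applies the single-player difference identity at each step, and telescopes. Your added remarks --- that the intermediate points lie in $K$ because $K=\prod_i K_i$, and that the per-step identity follows directly from Definition~\ref{def_potential} without any regularity --- are just careful spellings-out of what the paper leaves implicit.
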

\begin{proof}
 Consider a path $P:z\rightarrow(z_1+y_1,z_{-1})\rightarrow(z_1+y_1,z_2+y_2,z_{-\{1,2\}})\rightarrow \ldots\rightarrow z+y$, where $z\in K$ and $z+y\in K$ are arbitrary. By using \eqref{char1-pot-non-para} for every two sequential components of this path and summing them over $i=1,\ldots, N$, we obtain \eqref{char2-pot-non-para}. 
\end{proof}

In the sequel, the following definition of an abnormal game will be used.
\begin{definition}[Abnormal Game]\label{ab-game} A game $(\mathcal{N},\{f_i,K_i\}_{i\in\mathcal{N}})$ is an abnormal game if there is an $i\in\mathcal{N}$ such that for every $x_{-i}\in K_{-i}$ and for every $x_{i}\in K_{i}$ we have $f_i(x_i,x_{-i})=C_i(x_{-i})$ for some function $C_i:K_{-i}\rightarrow \mathbb{R}$.
\end{definition}
Thus, in an abnormal game, there is a player whose actions are not affecting its own cost function but it may affect other players' cost functions. 
In this case, there is no incentive for such a player to make any decision with respect to other players' decisions. 
In potential games that are abnormal, the potential function is not sensitive to the decision variables of such players. 

It turns out that an aggregative game that is not abnormal has some interesting properties. To formally state this,
we start by exploring the expression on the right hand-side of relation~\eqref{char2-pot-non-para}. Given  arbitrary $z\in K$ and $z+y\in K$, define the path $P$, as follows:
\begin{equation}\label{eq-path}
z\to(z_1+y_1,z_{-1})\to(z_1+y_1,z_2+y_2,z_{-\{1,2\}})\rightarrow\ldots\rightarrow z+y.
\end{equation}
Next, we denote the right-hand side of \eqref{char2-pot-non-para} by $h_P(z,y)$,
i.e.,
\begin{align}\label{def-hp}
h_P(z,y)&=\sum_{i=1}^N\big(f_i(z_1+y_1,\ldots,z_i+y_i,z_{i+1},\ldots,z_N)\nonumber \\
    &-f_i(z_1+y_1,\ldots,z_{i-1}+y_{i-1},z_{i},\ldots,z_N)\big),
    \end{align}
which corresponds to incremental differences in players' costs along the path $P$.
Note that 
\[h_P(z,0)=0 \qquad\hbox{for all }z\in K.\]

The following theorem shows that, in aggregative game  that is not abnormal,  
the function $h_P(z,y)$ is not equal to zero, when viewed as a function of $y$, given $z$. 

\begin{thm}\label{thm-agg}
Consider an aggregative game $\Gamma=(\mathcal{N},\{\tilde{f}_i,K_i\}_{i\in\mathcal{N}})$ that is not abnormal. Then, the following statements are true: 
\begin{itemize}
    \item [(a)] If $0\in K$, then 
    \[h_P(0,y)\ne 0\qquad\hbox{for all $y\in K$}.\]
    \item[(b)] If each cost function $f_i$ is defined on $\mathbb{R}^{\bar n}$,
    then
    \[h_P(0,y)\ne 0\qquad\hbox{for all $y\in \mathbb{R}^{\bar n}$}.\]
\end{itemize}
\end{thm}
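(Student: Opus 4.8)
The plan is to prove the contrapositive by first collapsing $h_P(0,y)$ into a sum of single-player increments that are each governed by the aggregative structure, and then using the hypothesis ``not abnormal'' to block the vanishing of the whole sum for any nonzero $y$. Substituting $z=0$ into~\eqref{def-hp} and introducing the partial aggregates $S_i=\sum_{j=1}^{i}y_j$ (with $S_0=0$), the point preceding player $i$'s move along $P$ is $(y_1,\ldots,y_{i-1},0,\ldots,0)$ and the point after it is $(y_1,\ldots,y_i,0,\ldots,0)$; these differ only in the $i$-th block, so using the aggregative form $f_i(x)=\tilde f_i\big(x_i,g_i(\bar x)\big)$ from~\eqref{game} yields
\[
h_P(0,y)=\sum_{i=1}^{N}\Big(\tilde f_i\big(y_i,g_i(S_i)\big)-\tilde f_i\big(0,g_i(S_{i-1})\big)\Big),
\]
in which the $i$-th summand is exactly the change in player $i$'s own cost as player $i$ alone switches from the null action $0$ to $y_i$ while the opponents' aggregate is frozen at $S_{i-1}$. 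For part (a) every action and every aggregate $S_i$ remains in the feasible sets because $0\in K$ and $y\in K$, whereas for part (b) the global definition of each $f_i$ lets the same increments be evaluated at an arbitrary $y\in\mathbb{R}^{\bar n}$ with no feasibility bookkeeping; thus both parts reduce to the same algebraic quantity and I would treat them in parallel, setting aside at the outset the degenerate value $y=0$, for which no genuine move is made and $h_P(0,0)=0$.

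Next I would assume, toward a contradiction, that $h_P(0,y)\ne0$ fails for some admissible $y\ne0$, i.e.\ that the displayed sum vanishes, and aim to extract from this a player whose cost does not depend on its own action. The hypothesis that $\Gamma$ is not abnormal means that for \emph{every} player $i$ the map $\tilde f_i(\cdot,g_i(\cdot))$ genuinely depends on its first argument; the goal is to convert this qualitative dependence into the statement that some increment in the sum is forced to be nonzero and cannot be annihilated by the remaining increments. The aggregative coupling is the lever here: since player $i$ sees the opponents only through the scalarized quantity $g_i(\bar x)$, the vanishing of the sum at a single $y$ should be propagated --- by perturbing individual blocks $y_i$ while tracking the induced changes in the aggregates $S_i,\ldots,S_N$ --- into the vanishing of a fixed player's increment over a whole range of its own actions, i.e.\ into $\tilde f_i(\cdot,g_i(\cdot))$ being constant in its first argument, which is precisely Definition~\ref{ab-game} and contradicts non-abnormality.

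The hard part is exactly the universal quantifier: establishing $h_P(0,y)\ne0$ for \emph{every} nonzero $y$ rather than merely for some $y$. A single identity $h_P(0,\hat y)=0$ does not on its own contradict non-abnormality, because non-abnormality only guarantees that each $f_i$ depends on $x_i$ somewhere, not that the particular increment taken along $P$ is nonzero, and a priori the $N$ increments could cancel against one another in the sum. The crux of the argument is therefore the non-cancellation step, and I expect it to require the full rigidity of the aggregative form --- that the opponent-dependence of each $\tilde f_i$ is funneled through the same aggregate that is being incremented block by block along $P$ --- to rule out such cancellation and to isolate the offending player. If this structural propagation can be carried out, the contradiction closes both parts; I regard pinning down precisely how the aggregative dependence forecloses cancellation, uniformly over all $y\ne0$, as the main obstacle of the proof.
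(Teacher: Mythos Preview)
Your reading of the statement is the natural one --- ``for every $y$, $h_P(0,y)\ne 0$'' --- but it is not the paper's intended meaning, and that strong version is in fact false. Take $N=2$, $K_1=K_2=\mathbb{R}$, $\tilde f_1(x_1,\bar x)=\bar x^2-x_1$ and $\tilde f_2(x_2,\bar x)=\bar x^2$; the game is aggregative and not abnormal (both costs vary with the own action), yet
\[
h_P\big(0,(y_1,y_2)\big)=y_1^2-y_1+2y_1y_2+y_2^2
\]
vanishes at $(y_1,y_2)=(1,0)\ne 0$. So the ``hard part'' you isolated --- ruling out cancellation at a single nonzero $y$ --- is not merely hard but impossible; the $y=0$ anomaly you already noticed was a hint that the literal reading cannot be what is meant.

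The paper's proof reveals the intended claim: $h_P(0,\cdot)$ is not the zero function. Its contrapositive assumes $h_P(0,y)=0$ for \emph{all} admissible $y$, which is far more to work with than a single vanishing point. The paper specializes to $y=(0,\ldots,0,u,v)$, so that only the last two summands in your displayed sum survive; setting $v=0$ forces the $(N{-}1)$-st increment to vanish identically in $u$, and feeding this back yields $\tilde f_N\big(v,g_N(u+v)\big)=\tilde f_N\big(0,g_N(u)\big)$ for all $u,v$, which through the aggregative dependence makes player $N$ abnormal. Under this correct reading your non-cancellation worry evaporates: identical vanishing lets you \emph{choose} $y$ so as to isolate whichever players you like, and no propagation argument is needed.
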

\begin{proof}
(a) 
We prove the statement via contradiction.
Assume on contrary that 
$h_P(0,y)= 0$ for all $y\in K$.
Consider $y=(0,\ldots,0,u,v)$ 
with $u\in K_{N-1}$ and $v\in K_N$. Thus, we have
\begin{align}\label{proof-prop-2}
    h_P(0,z)&=\tilde{f}_{N-1}\big(u,g_{N-1}(u)\big)-\tilde{f}_{N-1}\big(0,g_{N-1}(0)\big)\nonumber \\
    &+\tilde{f}_{N}\big(v,g_N(u+v)\big)-\tilde{f}_{N}\big(0,g_N(u)\big)=0.
\end{align}
Using $v=0$ in~\eqref{proof-prop-2}, we find that 
\begin{equation}\label{proof-prop-2-1}
   \tilde{f}_{N-1}\big(u,g_{N-1}(u)\big)-\tilde{f}_{N-1}\big(0,g_{N-1}(0)\big)=0.
\end{equation}
Therefore, for all $v\in K_N$,
\begin{equation}\label{proof-prop-2-2}
    \tilde{f}_{N}\big(v,g_{N}(u+v)\big)=\tilde{f}_{N}\big(0,g_{N}(u)\big).
\end{equation}
This implies that the cost value $\tilde{f}_{N}\big(v,g_{N}(u+v)\big)$ 
is constant on the set $K_N$. Therefore,
for every $x_{-N}$ we have 
$f_N(x_N,x_{-N})=f_N(0,x_{-N})$. Hence, the game is abnormal (for player $N$), which is  a contradiction.

(b) The proof 
follows along the same lines as in part (a) by noting that the restrictions to the sets $K$, $K_{N-1}$, and $K_N$ are replaced respectively by $\mathbb{R}^{\bar n}$, $\mathbb{R}^{n_{N-1}}$, and $\mathbb{R}^{n_N}.$ This will lead to a conclusion that \eqref{proof-prop-2-2} holds for all $v\in \mathbb{R}^{n_N}$, which in turn implies that the relation is also valid for $v\in K_N$. The rest follows as in part (a).
\end{proof}

\begin{rem}
Excluding aggregative games, in the general case, there might be non-abnormal games that satisfy \eqref{proof-prop-2}, \eqref{proof-prop-2-1}, and \eqref{proof-prop-2-2}. 
For instance, consider a game with $f_i(x)=\prod_{j=1}^N x_j$ for all $i\in \mathcal{N}$, and with a joint action set $K$ that contains the origin.
\end{rem}

\subsection{Necessary and Sufficient Conditions}
In the following theorem, we provide a necessary and sufficient condition for a game to have a  potential.
\begin{thm}\label{nec-suff-cond-agg-pot}
 The game $\Gamma=(\mathcal{N},\{f_i,K_i\}_{i\in\mathcal{N}})$ is a potential game if and only if there exists a function $\phi$ such that 
\begin{align}\label{char2-agg-pot-non-para}
    \phi(z+y) &-\phi(z)=\sum_{i=1}^N(f_i(z_1+y_1,\ldots,z_i+y_i,z_{i+1},\ldots,z_N)\cr
    &-f_i(z_1+y_1,\ldots,z_{i-1}+y_{i-1},z_{i},\ldots,z_N)),
\end{align}
for all $z\in K$ and $y$ with $z+y\in K$. 
Moreover, the function $\phi$ is a potential function of the game.
\end{thm}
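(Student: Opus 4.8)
The plan is to prove the two directions separately, with the forward direction (necessity) being essentially free from Theorem~\ref{theorem-3}.

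\textbf{Necessity.} Suppose $\Gamma$ is a potential game with potential function $\phi$. Then $\phi$ satisfies the defining relation~\eqref{eq_def_potential}, hence also the non-parametric relation~\eqref{char1-pot-non-para}. Applying Theorem~\ref{theorem-3} directly, $\phi$ satisfies~\eqref{char2-pot-non-para}, which is precisely~\eqref{char2-agg-pot-non-para}. So the same $\phi$ works, and this direction is immediate.

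\textbf{Sufficiency.} This is the substantive direction. Assume there is some $\phi$ satisfying~\eqref{char2-agg-pot-non-para} for all admissible $z,y$. I must show $\phi$ is a potential function, i.e., that~\eqref{eq_def_potential} holds: for every $i\in\mathcal{N}$, every $x_{-i}\in K_{-i}$, and every $x_i,x_i'\in K_i$,
\[
\phi(x_i',x_{-i})-\phi(x_i,x_{-i})=f_i(x_i',x_{-i})-f_i(x_i,x_{-i}).
\]
The idea is to specialize the hypothesis to a ``one-coordinate move.'' Fix $i$ and take $z=(x_1,\dots,x_N)$ with $x_i$ in the $i$-th slot and $z_{-i}=x_{-i}$, and take $y$ with $y_j=0$ for all $j\ne i$ and $y_i=x_i'-x_i$. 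Then $z+y=(x_i',x_{-i})$. Plugging into~\eqref{char2-agg-pot-non-para}: every term in the telescoping sum with index $k\ne i$ has matching arguments (since $y_k=0$, the two cost evaluations coincide), so those terms vanish; the only surviving term is the $i$-th, which equals $f_i(x_1+y_1,\dots,x_i+y_i,x_{i+1},\dots,x_N)-f_i(x_1+y_1,\dots,x_{i-1}+y_{i-1},x_i,\dots,x_N)$. But again $y_k=0$ for $k\ne i$, so this reduces exactly to $f_i(x_i',x_{-i})-f_i(x_i,x_{-i})$. The left-hand side is $\phi(x_i',x_{-i})-\phi(x_i,x_{-i})$. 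This is exactly the potential-game identity, so $\phi$ is a potential function and $\Gamma$ is a potential game.

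\textbf{Main obstacle.} There is essentially no deep obstacle here; the theorem is a bookkeeping consequence of Theorem~\ref{theorem-3} together with the observation that the telescoping sum collapses to a single term under a one-coordinate perturbation. The only point requiring a little care is the domain: one should note that for the chosen $z$ and $y$, every intermediate point along the path $P$ in~\eqref{eq-path} stays in $K$ (indeed each intermediate point is either $z$ itself or $(x_i',x_{-i})=z+y$, both assumed in $K$), so the hypothesis~\eqref{char2-agg-pot-non-para} legitimately applies. I would also remark that no continuity or differentiability of the $f_i$ is used anywhere, which is the whole point of phrasing the characterization in terms of differences.
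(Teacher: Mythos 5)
Your proof is correct and follows essentially the same route as the paper's: necessity is an immediate application of Theorem~\ref{theorem-3}, and sufficiency comes from specializing \eqref{char2-agg-pot-non-para} to a one-coordinate perturbation $y=(0,\ldots,y_i,\ldots,0)$ so that the telescoping sum collapses to the single $i$-th term, recovering \eqref{eq_def_potential}. Your extra remark that the intermediate points of the path remain in $K$ (automatic here since $K$ is a product set) is a harmless bit of additional care not spelled out in the paper.
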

\begin{proof}
If the game $\Gamma$ is a potential game, then 
by Theorem~\ref{theorem-3}
relation \eqref{char2-agg-pot-non-para} holds. For the converse statement, the existence of a function $\phi$ such that \eqref{char2-agg-pot-non-para} holds
implies that for any $z\in K$
any $y=(0,\ldots,y_i,\ldots,0)$, 
with $z_i+y_i\in K_i$,  and for every $i\in \mathcal{N}$, we have that 
$\phi(z_i+y_i,z_{-i})-\phi(z_i,z_{-i})=f_i\big(z_i+y_i,z_{-i}\big)-f_i\big(z_i,z_{-i}\big)$. Thus, the game is a potential game and $\phi$ a potential function. 
\end{proof}

In the sequel, the following notion will be used.
\begin{definition}\label{def-genpot}
Given a collection of functions $\{f_i, i\in {\cal N}\}$, where each $f_i:\mathbb{R}^{\bar n}\to\mathbb{R}$,
we say that $\phi:\mathbb{R}^{\bar n}\to\mathbb{R}$ is a {\it global potential for the collection $\{f_i, i\in {\cal N}\}$}, when the following holds: for all $i\in {\cal N},$
all $x_i,x'_i\in\mathbb{R}^{n_i}$, and all $x_{-i}\in \mathbb{R}^{\bar n- n_i},$
\[f_i(x_i,x_{-i})-f_i(x'_i,x_{-i})
=\phi(x_i,x_{-i})- \phi(x'_i,x_{-i}).\]
\end{definition}
This definition coincides with Definition~\ref{def_potential},
when $K=\mathbb{R}^{n_i}$.
However, we want to consider the notion of global potential as the property of a given collection $\{f_i, i\in {\cal N}\}$ of functions irrespective of a particular game that may be associated with it. That is, 
if a given collection $\{f_i, i\in {\cal N}\}$
has a global potential $\phi$, then for any collection $\{K_i,i\in{\cal N}\}$, with $K_i\subset\mathbb{R}^{n_i}$, 
the game $\Gamma=(\mathcal{N},\{f_i,K_i\}_{i\in\mathcal{N}})$ is a potential game, with a potential function $\phi$.

In the next theorem, we establish a necessary and sufficient conditions for a game to be potential assuming that the strategy sets $K_i$ are symmetric,
where a set $X$ is said to be symmetric if 
for every $x\in X$  we have that $-x\in X$.
\begin{thm}\label{nec-suff-cond-2-agg-pot}
Assume that either one of the following cases holds true:
\begin{enumerate}
    \item [(a)] $0\in K$ and the set
$K_i$ is symmetric for all $i \in \mathcal{N}$.
    \item [(b)] Each cost function $f_i(\cdot)$, for $i\in \mathcal{N}$, is defined on the entire set $\mathbb{R}^{\bar{n}}$.
\end{enumerate}
Then, the game $\Gamma=(\mathcal{N},\{f_i,K_i\}_{i\in\mathcal{N}})$ is a potential game if and only if $h_P(z,y)=h_P(0,z+y)-h_P(0,z)$, and the potential function is $\phi(z)=C-h_P(z,-z)$, where $C$ is some constant.
\end{thm}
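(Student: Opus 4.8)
The plan is to deduce everything from Theorem~\ref{nec-suff-cond-agg-pot} together with Theorem~\ref{theorem-3}: the former says $\Gamma$ is potential iff \emph{some} $\phi$ satisfies $\phi(z+y)-\phi(z)=h_P(z,y)$ for all $z,z+y\in K$, and the latter says that in fact \emph{every} potential function of $\Gamma$ satisfies this relation. So the only new content is to remove the existential quantifier over $\phi$ by exhibiting $\phi$ explicitly. Hypotheses (a) ($0\in K$ and the $K_i$ symmetric) and (b) ($f_i$ defined on all of $\mathbb{R}^{\bar n}$) are exactly what make the quantities $h_P(0,\cdot)$ and $h_P(\cdot,-\cdot)$ — and hence the formula $\phi(z)=C-h_P(z,-z)$ — meaningful.

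For necessity, I would take a potential $\phi$ of $\Gamma$. By Theorem~\ref{theorem-3}, $\phi(z+y)-\phi(z)=h_P(z,y)$ for all admissible $z,y$; specializing $z=0$ (legitimate since $0\in K$) gives $\phi(w)-\phi(0)=h_P(0,w)$ for every $w\in K$. Substituting this back yields
\[
h_P(z,y)=\phi(z+y)-\phi(z)=\big(\phi(z+y)-\phi(0)\big)-\big(\phi(z)-\phi(0)\big)=h_P(0,z+y)-h_P(0,z),
\]
which is the asserted identity. Taking $y=-z$ in it and using $h_P(0,0)=0$ gives $h_P(z,-z)=-h_P(0,z)$, hence $\phi(z)=\phi(0)+h_P(0,z)=\phi(0)-h_P(z,-z)$; that is, $\phi$ has the stated form with $C=\phi(0)$.

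For sufficiency, assuming $h_P(z,y)=h_P(0,z+y)-h_P(0,z)$, I would first note (again setting $y=-z$ and using $h_P(0,0)=0$) that $-h_P(z,-z)=h_P(0,z)$, so that $\phi(z):=C-h_P(z,-z)=C+h_P(0,z)$. Then
\[
\phi(z+y)-\phi(z)=h_P(0,z+y)-h_P(0,z)=h_P(z,y)
\]
for all $z,z+y\in K$, so by Theorem~\ref{nec-suff-cond-agg-pot} the game $\Gamma$ is potential with potential function $\phi$. Combined with the necessity part, this also shows $\phi(z)=C-h_P(z,-z)$ is the general form of a potential function of $\Gamma$.

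Case (b) should follow by running the same computation with $\mathbb{R}^{\bar n}$ in place of $K$ (it contains $0$ and is symmetric), noting that since each $f_i$ is defined everywhere the quantities $h_P(0,\cdot)$ and $h_P(\cdot,-\cdot)$ are well defined even if $0\notin K$, and that identities obtained on $\mathbb{R}^{\bar n}$ restrict to $K_i\subset\mathbb{R}^{n_i}$ exactly as in the proof of Theorem~\ref{thm-agg}(b). I expect the only real obstacle to be this domain bookkeeping: one must check that every evaluation of $h_P$ used above — in particular $h_P(0,\cdot)$ and the substitution $y=-z$ — involves only points in the common domain of the $f_i$, which is precisely what (a) and (b) are designed to guarantee (under (a), $0\in K$ together with symmetry of the $K_i$ keeps the ``return path'' from $z$ to $0$ inside $K$). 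Everything else is a short algebraic rearrangement of the identity supplied by Theorem~\ref{nec-suff-cond-agg-pot}.
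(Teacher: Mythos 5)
Your proof is correct and follows essentially the same route as the paper's: both reduce to Theorem~\ref{nec-suff-cond-agg-pot}, i.e.\ to the relation $h_P(z,y)=\phi(z+y)-\phi(z)$, and then extract the identity $h_P(z,y)=h_P(0,z+y)-h_P(0,z)$ and the formula $\phi(z)=C-h_P(z,-z)$ via the substitutions $z=0$ and $y=-z$ together with $h_P(0,0)=0$. One small point in your favor: you establish the sufficiency direction explicitly, by defining $\phi=C+h_P(0,\cdot)$ from the assumed identity and feeding it back into Theorem~\ref{nec-suff-cond-agg-pot}, whereas the paper phrases its argument as a chain of biconditionals and only writes out the necessity computation, so your write-up is the more complete of the two.
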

\begin{proof}
(a)\ By Theorem~\ref{nec-suff-cond-agg-pot}(a),
the definition of $h_P(z,y)$ in~\eqref{def-hp}, we have that the game is potential if and only if 
\begin{equation}\label{eq-hpmid0}
h_P(z,y)=\phi(z+y)-\phi(z)
\end{equation}
for all $z\in K$ and $y$ with $z+y\in K$.
Then, using the symmetry of the set $K$ and relation
~\eqref{char2-agg-pot-non-para} with $y=-z$,
we have that the game is potential if and only if 
\begin{equation}\label{eq-hpmid}
h_P(z,-z)=\phi(0)-\phi(z)\qquad\hbox{for all $z\in K$},
\end{equation}
and  a potential function is given by
$\phi(z)=\phi(0)-h_P(z,-z)$.

It remains to show that 
relation~\eqref{eq-hpmid} implies that $h_P(z,y)=h_P(0,z+y) - h_P(0,z)$.
To show this, we note that from~\eqref{eq-hpmid}  it follows that
\begin{equation}\label{eq-hpmid2}
h_P(z+y,-(z+y))=\phi(0)-\phi(z+y).
\end{equation}
Thus, using~\eqref{eq-hpmid0} we obtain
\begin{align}\label{eq-hpmid3}
h_P(z,y)&=\phi(x+z)-\phi(z)\cr
&=\phi(x+z)+h_P(z,-z)-\phi(0)\cr
&=\phi(0)-h_P(z+y,-(z+y))+h_P(z,-z)-\phi(0)\cr&
=h_P(z,-z)-h_P\left(z+y,-(z+y)\right),
\end{align}
where the first equality is obtained using~\eqref{eq-hpmid},
while the second one is obtained using~\eqref{eq-hpmid2}.
By letting $z=0$ in the preceding relation, we obtain 
\begin{equation}\label{eq-hpmid4}
h_P(0,y)=h_P(0,0)-h_P(y,-y).
\end{equation}
Letting $y=0$ in~\eqref{eq-hpmid4},  we find that  $h_P(0,0)=0$,
implying that
\[h_P(0,y)=-h_P(y,-y).\]
Thus, $h_P(y,-y)=-h_P(0,y)$ and using this relation in equation~\eqref{eq-hpmid3} yields
\[h_P(z,y)=h_P(0,z+y)-h_P(0,z).\]
(b) The proof of part (b) follows the same line of arguments as in part (a), where $K$ is replaced with $\mathbb{R}^{\bar n}$.
\end{proof}

A question may raise about what relation between potential games and duopoly potential games holds. In the sequel, we investigate this relation and show that a necessary and sufficient condition for an $N$-player game to be potential is that every $2$-player game of a suitably defined subgame is potential game.

Consider a pair $i,j\in \mathcal{N}$ of distinct players, and 
without loss of generality, assume that 
$i<j$. We define $h_{ij}(z_i,z_j,y_i,y_j;z_{-\{i,j\}})$ as follows:
\begin{align}\label{char-agg-pot-def-1}
h_{ij}(z_i,z_j,y_i,y_j;z_{-\{i,j\}})
    =&f_{i}\big(z_i+y_i,z_j;z_{-\{i,j\}}\big)\nonumber \\
    &-f_{i}\big(z_i,z_j;z_{-\{i,j\}}\big)\nonumber \\
    &+f_{j}\big(z_j+y_j,z_i+y_i;z_{-\{i,j\}}\big)\nonumber \\
    &-f_{j}\big(z_j,z_i+y_i;z_{-\{i,j\}}\big)
\end{align}
for every $z_i\in K_i$, $z_j\in K_j$, $z_{-\{i,j\}}\in K_{-\{i,j\}}$,
and $y_i$ and $y_j$ with $z_i+y_i\in K_i$ and 
$z_j+y_j\in K_j$.

Aside from the paths from $z$ to $z+y$ for two-player strategies, we will also use longer paths. In particular,
a finite path in the strategy space $K$ is a sequence $(x^1,x^2,\ldots,x^m)$ of elements $x^k\in K$ such that the strategies $x^k$ and $x^{k+1}$ differ only in an action of a single player $i_k$, for all $k=1,2,\ldots,m-1$. In other words, starting from a  strategy profile $x^1\in K$, the next strategy $x^2$ is obtained by a single player $i_1$  deviating from its decision $x_{i_1}^1$, and so on.

For a finite path $\mathcal{Q}=(q^0,\ldots,q^\ell)$, where $q^l \in K$ for $l\in\{0,1,\ldots,\ell\}$, and a collection $f=(f_1,\ldots,f_N)$ of the cost functions $f_i:K\rightarrow \mathbb{R}$, we  consider
the following quantity:
\begin{equation}\label{shap_c}
    I(\mathcal{Q},f)=\sum_{e=0}^{\ell-1}\big(f_{i_{e}}(q^{e+1})-f_{i_{e}}(q^{e})\big),
\end{equation}
where, $i_e$ is the unique deviator at step $e$ (i.e., $q^{e}_{i_e}\neq q^{e+1}_{i_{e}}$). Note that this quantity $I(\mathcal{Q},f)$ has been used in \cite{monderer1996potential} to study potential games.
In particular, in \cite{monderer1996potential}, it
has been proven that a game $\Gamma=(\mathcal{N},\{f_i,K_i\}_{i\in\mathcal{N}})$ is potential if and only if $I(\mathcal{Q},f)=0$ for every finite simple closed path $\mathcal{Q}$ of length $4$; here, 
a path $\mathcal{Q}$ is closed if 
$q^0=q^\ell$, and it is simple when the strategies $q^l$, $l=1,\ldots,\ell$ are distinct. The length of a path is the number of distinct strategies in the path.

Using the special paths of length 2, i.e., the quantities $h_{ij}$ defined in
~\eqref{char-agg-pot-def-1}, we have the following result.

\begin{thm}\label{theorem 7}
Assume that either one of the following cases holds true:
\begin{enumerate}
    \item [(a)] $0\in K$ and the set
$K_i$ is symmetric for all $i \in \mathcal{N}$.
    \item [(b)] The cost function $f_i(\cdot)$ is defined on the entire set $\mathbb{R}^{\bar{n}}$, for all $i\in \mathcal{N}$ .
\end{enumerate}
Then, the game $\Gamma=(\mathcal{N},\{f_i,K_i\}_{i\in\mathcal{N}})$ is potential if and only if for all $i,j\in \mathcal{N}$ and $z_{-\{i,j\}}\in K_{-\{i,j\}}$ we have 
\begin{align}
\label{char-agg-pot-def-0}
h_{ij}(z_{i},z_{j},y_{i},y_{j};z_{-\{i,j\}}) 
    =&h_{ij}(0,0,z_{i}+ y_{i},z_{j}+ y_{j};z_{-\{i,j\}})\nonumber \\
    &-h_{ij}(0,0,z_{i},z_{j};z_{-\{i,j\}}),
    \end{align}
    where $h_{ij}$ is defined in~\eqref{char-agg-pot-def-1}.

\end{thm}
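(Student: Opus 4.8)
The statement is a two-player reduction of Theorem~\ref{nec-suff-cond-2-agg-pot}. The plan is to reduce the $N$-player characterization in Theorem~\ref{nec-suff-cond-2-agg-pot} to a collection of two-player conditions, one for each pair $i,j$ and each fixed profile $z_{-\{i,j\}}$ of the other players. The key observation is that equation~\eqref{eq_def_potential} in Definition~\ref{def_potential} only ever moves one player at a time, so whether $\Gamma$ is potential depends only on the two-player ``restricted'' games obtained by freezing all but two coordinates. More precisely, for fixed $i<j$ and fixed $z_{-\{i,j\}}\in K_{-\{i,j\}}$, define the two-player game $\Gamma_{ij}^{z_{-\{i,j\}}}$ with players $\{i,j\}$, strategy sets $K_i,K_j$, and cost functions $\hat f_i(x_i,x_j):=f_i(x_i,x_j;z_{-\{i,j\}})$ and $\hat f_j(x_i,x_j):=f_j(x_i,x_j;z_{-\{i,j\}})$. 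I would first record the standard fact (essentially Monderer--Shapley, and consistent with the $I(\mathcal{Q},f)=0$ criterion quoted in the excerpt) that $\Gamma$ is a potential game if and only if every restricted two-player game $\Gamma_{ij}^{z_{-\{i,j\}}}$ is a potential game. The forward direction is immediate by restricting $\phi$; the converse requires assembling the per-pair two-player potentials into a global $\phi$, which is exactly where the hypotheses (a)/(b) and the closed-path argument enter.

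**Main step.** Having made this reduction, I would apply Theorem~\ref{nec-suff-cond-2-agg-pot} to each two-player game $\Gamma_{ij}^{z_{-\{i,j\}}}$. Note that hypothesis (a) or (b) of the present theorem restricts to exactly hypothesis (a) or (b) of Theorem~\ref{nec-suff-cond-2-agg-pot} for the two-player game (symmetry of $K_i,K_j$ and $0\in K_i\times K_j$ in case (a); cost functions defined on all of $\mathbb{R}^{n_i+n_j}$ in case (b)). For a two-player game, the path $P$ of~\eqref{eq-path} from $(z_i,z_j)$ to $(z_i+y_i,z_j+y_j)$ has exactly the two legs appearing in~\eqref{char-agg-pot-def-1}, so the quantity $h_P$ of~\eqref{def-hp} specializes precisely to $h_{ij}(z_i,z_j,y_i,y_j;z_{-\{i,j\}})$. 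Then the condition ``$h_P(z,y)=h_P(0,z+y)-h_P(0,z)$'' from Theorem~\ref{nec-suff-cond-2-agg-pot}, applied to $\Gamma_{ij}^{z_{-\{i,j\}}}$, becomes exactly~\eqref{char-agg-pot-def-0}. Combining this with the reduction from the previous paragraph yields: $\Gamma$ is potential $\iff$ every $\Gamma_{ij}^{z_{-\{i,j\}}}$ is potential $\iff$~\eqref{char-agg-pot-def-0} holds for all $i,j$ and all $z_{-\{i,j\}}$.

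**The obstacle.** The genuinely nontrivial part is the converse of the reduction: showing that if \emph{all} restricted two-player games are potential, then $\Gamma$ itself is a potential game (equivalently, that the per-pair potentials can be glued into one $\phi$ on $K$). Here I would invoke the finite-path quantity $I(\mathcal{Q},f)$ from~\eqref{shap_c} and the cited Monderer--Shapley criterion: $\Gamma$ is potential iff $I(\mathcal{Q},f)=0$ for every simple closed path of length $4$. Any such length-$4$ closed path flips the coordinates of at most two players, hence lives entirely inside one restricted two-player game $\Gamma_{ij}^{z_{-\{i,j\}}}$; since that game is potential, the corresponding $I$ vanishes. One must check that~\eqref{char-agg-pot-def-0} (together with hypothesis (a) or (b), which guarantees $0$ and the ``doubled'' points $(z_i+y_i,z_j+y_j)$ lie in the relevant domain) indeed forces each $\Gamma_{ij}^{z_{-\{i,j\}}}$ to be potential via Theorem~\ref{nec-suff-cond-2-agg-pot}. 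I would present the argument in the order: (1) state and prove the two-player reduction using the length-$4$ closed-path criterion; (2) identify $h_P$ for a two-player path with $h_{ij}$; (3) apply Theorem~\ref{nec-suff-cond-2-agg-pot} pairwise and conclude. The length-$4$-path step is where care is needed, since one must confirm that restricting attention to two-player closed paths loses no information — which is precisely the content of the Monderer--Shapley result quoted just before the theorem.
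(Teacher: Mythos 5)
Your proposal is correct and follows essentially the same route as the paper: both directions rest on Theorem~\ref{nec-suff-cond-2-agg-pot} together with the Monderer--Shapley length-$4$ closed-path criterion, with $h_{ij}$ identified as the two-step path quantity $h_P$. Your explicit packaging via the restricted two-player games $\Gamma_{ij}^{z_{-\{i,j\}}}$ is a clean reorganization of what the paper does by applying Theorem~\ref{nec-suff-cond-2-agg-pot} to the full game with the sparse increment $y=(y_i,y_j;0_{-\{i,j\}})$ for the forward direction and by expanding $I(\mathcal{Q},f)$ directly for the converse.
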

\begin{proof}
(a) 
Let the game $\Gamma=(\mathcal{N},\{f_i,K_i\}_{i\in\mathcal{N}})$ be potential. Then, according to Theorem 2.8 of~\cite{monderer1996potential}, for every closed path $\mathcal{Q}$ of length $4$ we have $I(\mathcal{Q},f)=0$. Let 
\begin{align}\label{eq-simplepath}
    \mathcal{Q}:z&\rightarrow(z_1,\ldots,z_{i}+y_{i},\ldots,z_N) \nonumber \\
    &\rightarrow (z_1,\ldots,z_{i}+y_{i},\ldots,z_{j}+y_{j},\ldots,z_N)\nonumber \\
    &\rightarrow(z_1,\ldots,z_{i},\ldots,z_{j}+y_{j},\ldots,z_N)\cr
    &\rightarrow z,\qquad
\end{align} 
where $z=(z_1,z_2,\ldots,z_N)$. Let  $y=(y_{i},y_{j};0_{-\{i,j\}})$. 
Consider the path $P$ from $z$ to $z+y$ as given in~\eqref{eq-path}. 
By Theorem~\ref{nec-suff-cond-2-agg-pot}, in a potential game, we have that 
\begin{align}\label{theorem-item-3-1}
    h_{ij}\big(z_i,z_j,y_i,y_j;z_{-\{i,j\}}\big)
    &=h_{P}(z,y)\cr
    &=h_{P}(0,z+y)-h_{P}(0,z).\qquad
\end{align}
Since $h_{ij}\big(z_i,z_j,y_i,y_j;z_{-\{i,j\}}\big)
    =h_{P}(z,y)$, for the right hand side of \eqref{theorem-item-3-1} we have that
\begin{align}\label{theorem-item-3-1-2}
h_{P}(0,z+y)-h_{P}(0,z)
= & h_{ij}\big(0,0,z_{i}+y_i,z_{j}+y_j;z_{-\{i,j\}}\big) \nonumber\\
    &-h_{ij}\big(0,0,z_{i},z_{j};z_{-\{i,j\}}\big).
\end{align}
By combining relations~\eqref{theorem-item-3-1} 
and~\eqref{theorem-item-3-1-2},
we obtain the stated relation in~\eqref{char-agg-pot-def-0}.

For the converse statement, let assume \eqref{char-agg-pot-def-0} holds for every $i,j\in \mathcal{N}$ and $z_{-\{i,j\}}\in K_{-\{i,j\}}$. Let 
$\mathcal{Q}$ be an arbitrary simple closed path of length $4$, as given in~\eqref{eq-simplepath}. We expand the $I(\mathcal{Q},f)$ along the path $\mathcal{Q}$ and then collect relevant terms to use functions $h_{ij}(\cdot), $ as follows:
\begin{align*}
    I(\mathcal{Q},f) 
    = &
    f_i(z_i+y_i,z_j,z_{-\{i,j\}})-f_i(z_i,z_j,z_{-\{i,j\}})\nonumber\\
    &+f_j(z_j+y_j,z_i+y_i,z_{-\{i,j\}})\nonumber\\
    &-f_j(z_j,z_i+y_i,z_{-\{i,j\}})\nonumber\\
    &+f_i(z_i,z_j+y_j,z_{-\{i,j\}})\nonumber\\
    &-f_i(z_i+y_i,z_j+y_j,z_{-\{i,j\}})\nonumber\\
    &+f_j(z_j,z_i,z_{-\{i,j\}})-f_j(z_j+y_j,z_i,z_{-\{i,j\}})\nonumber\\
    = &
    h_{ij}(z_{i},z_{j},y_{i},y_{j};z_{-\{i,j\}})\nonumber\\
    &+h_{ij}(z_{i}+y_{i},z_{j}+y_{j},-y_{i},-y_{j};z_{-\{i,j\}})\nonumber\\
    = &
    h_{ij}(0,0,z_{i}+y_{i},z_{j}+y_{j};z_{-\{i,j\}})\nonumber\\
    &-h_{ij}(0,0,z_{i},z_{j};z_{-\{i,j\}})\nonumber\\
    &+h_{ij}(0,0,z_{i},z_{j};z_{-\{i,j\}})\nonumber\\
    &-h_{ij}(0,0,z_{i}+y_{i},z_{j}+y_{j};z_{-\{i,j\}})\nonumber\\
    =&0,
\end{align*}
where the first equality is obtained using the definition of $h_{ij}$ (see ~\eqref{char-agg-pot-def-1}), while the second
equality is obtained by applying \eqref{char-agg-pot-def-0}. Since $\mathcal{Q}$ is an arbitrary path of length $4$ in the action space, this game is potential by Theorem 2.8 of~\cite{monderer1996potential}.

(b) The result follows from part (a) by replacing each set $K_i$ with $\mathbb{R}^{n_i}.$
\end{proof}
\begin{rem}
    Theorem~\ref{theorem 7} can be viewed as an alternative to Theorem 2.8 (part 4) of \cite{monderer1996potential}.
    Under additional assumptions on the strategy sets, as compared to Theorem 2.8 (part 4) of \cite{monderer1996potential},
    Theorem~\ref{theorem 7} enables us to construct a potential function through the use of Theorem~\ref{nec-suff-cond-2-agg-pot}. While Theorem 2.8 of~\cite{monderer1996potential}
    does not use special assumptions, it does not provide a systematic way
    for constructing a potential function.
\end{rem}

The following example illustrates an application of Theorem~\ref{theorem 7}  to a 3-player Cournot game.

\begin{example} Consider a 3-player Cournot game with 
the following cost functions:
\begin{align}
    &f_1(x_{1},x_{2},x_{3})= (a-b\bar{x})x_1-cx_1, \\
    &f_2(x_{1},x_{2},x_{3})= (a-b\bar{x})x_2-cx_2, \\
    &f_3(x_{1},x_{2},x_{3})= (a-b\bar{x})x_3-cx_3,
\end{align}
where $x_i\in \mathbb{R}$ 
is the decision variable of player $i$, for $i\in\{1,2,3\}$,
and $a,b$ and $c$ are scalars.
We choose players 1 and 2 as an example to verify that 
the game is potential by using Theorem~\ref{theorem 7}(b). For this we need to check that relation~\eqref{char-agg-pot-def-0} holds. We have
\begin{align}\label{EXM-2-1}
    &h_{12}(z_{1},z_{2},y_{1},y_{2};z_3)\nonumber \\
    =&\left(a-b(\bar{z}+y_1)\right)(z_1+y_1)-c(z_1+y_1)\nonumber \\
    &-\left((a-b\bar{z})z_1-cz_1\right)\nonumber \\
    &+\left(a-b(\bar{z}+y_1+y_2)\right)(z_2+y_2)-c(z_2+y_2)\nonumber \\
    &-\left((a-b(\bar{z}+y_1))z_2-cz_2\right).
\end{align}

Moreover, we have
\begin{align}\label{EXM-2-2}
   &h_{12}(0,0,z_{1}+y_{1},z_{2}+y_{2};z_3)-h_{12}(0,0,z_{1},z_{2};z_3)\cr
   = &
   \left(a-b(z_1+z_3+y_1)\right)(z_1+y_1)-c(z_1+y_1)\nonumber \\
   &+\left(a-b(\bar{z}+y_1+y_2)\right)(z_2+y_2)-c(z_2+y_2)\nonumber \\
   &-\left((a-b(z_1+z_3))z_1-cz_1\right)\nonumber \\
   &-\left((a-b\bar{z})z_2-cz_2\right).
\end{align}
We can verify that \eqref{EXM-2-1} and \eqref{EXM-2-2} are equivalent. Thus, the game is potential.
\end{example}

According to Theorem~\ref{theorem 7}, the condition for a game to be potential reduces to checking some equalities in terms of the functions $h_{ij}(\cdot)$ for every $i,j \in \mathcal{N}$. Thus, we may employ Theorem~\ref{theorem 7} to determine a potential function of an $N$-player potential game in terms of these functions. The following theorem provides a characterization of the potential function for  a potential game. 
\begin{thm}\label{theorem 8}
Let an $N$-player game $\Gamma=(\mathcal{N},\{f_i,K_i\}_{i\in\mathcal{N}})$ be potential. Assume that either one of the following cases holds true:
\begin{enumerate}
    \item [(a)] $0\in K$ and the set
$K_i$ is symmetric for all $i \in \mathcal{N}$.
    \item [(b)] The function $f_i(\cdot)$ for $i\in \mathcal{N}$ is defined on $\mathbb{R}^{\bar{n}}$, for all $i\in\mathcal{N}$.
\end{enumerate}

Let $z\in K$ and $z+y\in K$ be arbitrary, and 
consider the path $P$ from $z$ to $z+y$, i.e., 
$P:z\rightarrow(z_1+y_1,z_{-1})\rightarrow(z_1+y_1,z_2+y_2,z_{-\{1,2\}})\rightarrow \ldots\rightarrow z+y$. 
Consider the
function $\phi(\cdot)$ defined by:
if $N=2k+1$ for some $k\in \mathbb{N}$, then
\begin{align}\label{theorem 8-1}
    \phi(z)=\phi(0)&+h_{P_3}(0,(z_1,z_2,z_3))\nonumber \\
    &+\sum_{i=2}^k h_{2i,2i+1}(0,0,z_{2i},z_{2i+1};\hat{z}_{2i-1}),
\end{align}
else if $N=2k$, then
\begin{align}\label{theorem 8-2}
    \phi(z)=\phi(0)&+h_{P_2}(0,(z_1,z_2))\nonumber \\
    &+\sum_{i=2}^k h_{2i-1,2i}(0,0,z_{2i-1},z_{2i};\hat{z}_{2i-2}),
\end{align}
where 
$P_3$ and $P_2$ denote the first 3 and the first 2 steps of the path $P$, respectively, while 
 $\hat{z}_i= (z_1,\ldots,z_i,0,\ldots,0)$ for all $i$.  
Then, $\phi(\cdot)$ is a potential function of the game.
\end{thm}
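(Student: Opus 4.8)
The plan is to recognize the function defined in \eqref{theorem 8-1}--\eqref{theorem 8-2} as the closed-form potential $\phi(z)=\phi(0)+h_P(0,z)$ that is already available once we know $\Gamma$ is potential, and then to verify a purely algebraic rewriting of $h_P(0,z)$ in terms of the two-player increments $h_{ij}$ of \eqref{char-agg-pot-def-1}.

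First I would invoke Theorem~\ref{nec-suff-cond-2-agg-pot}: under hypothesis (a) or (b), since $\Gamma$ is potential, every potential function equals $\phi(z)=C-h_P(z,-z)$ for some constant $C$, and the proof of that theorem also yields $h_P(y,-y)=-h_P(0,y)$. Combining these and evaluating at $z=0$ gives $\phi(z)=\phi(0)+h_P(0,z)$. Unwinding \eqref{def-hp} with base point $0$ and increment $z$, and writing $\hat z_i=(z_1,\dots,z_i,0,\dots,0)$ as in the statement (so $\hat z_0=0$), this reads $h_P(0,z)=\sum_{i=1}^N\bigl(f_i(\hat z_i)-f_i(\hat z_{i-1})\bigr)$, a telescoping sum of cost differences along the canonical path $P$ from $0$ to $z$. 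It therefore suffices to show that the right-hand sides of \eqref{theorem 8-1} and \eqref{theorem 8-2} equal $\phi(0)+\sum_{i=1}^N\bigl(f_i(\hat z_i)-f_i(\hat z_{i-1})\bigr)$.

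The key step is a pairing identity for two consecutive telescoping terms. Substituting $z_i=z_{i+1}=0$, $y_i=z_i$, $y_{i+1}=z_{i+1}$ and the spectator profile $\hat z_{i-1}$ into definition \eqref{char-agg-pot-def-1}, I would check that
\[
h_{i,i+1}\bigl(0,0,z_i,z_{i+1};\hat z_{i-1}\bigr)=f_i(\hat z_i)-f_i(\hat z_{i-1})+f_{i+1}(\hat z_{i+1})-f_{i+1}(\hat z_i)\qquad(1\le i\le N-1),
\]
which holds because $\hat z_{i-1}$ is nonzero only in coordinates $1,\dots,i-1$, so the four cost evaluations in $h_{i,i+1}$ are exactly four consecutive vertices of the path $P$. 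For $i=1$ the left-hand side is $h_{1,2}(0,0,z_1,z_2;\hat z_0)$, which equals $h_{P_2}(0,(z_1,z_2))$; and the first three telescoping terms together equal $h_{P_3}(0,(z_1,z_2,z_3))$, by the definition of $P_2$ and $P_3$ as the first two (resp.\ three) steps of $P$. Then I would assemble by parity: for $N=2k$, group the $2k$ terms into the pairs $(1,2),(3,4),\dots,(2k-1,2k)$, obtaining $h_{P_2}(0,(z_1,z_2))$ plus $\sum_{i=2}^k h_{2i-1,2i}(0,0,z_{2i-1},z_{2i};\hat z_{2i-2})$, i.e.\ \eqref{theorem 8-2} minus $\phi(0)$; for $N=2k+1$, keep the first three terms as $h_{P_3}(0,(z_1,z_2,z_3))$ and group the remaining $2k-2$ terms into the pairs $(4,5),(6,7),\dots,(2k,2k+1)$, obtaining $h_{P_3}(0,(z_1,z_2,z_3))$ plus $\sum_{i=2}^k h_{2i,2i+1}(0,0,z_{2i},z_{2i+1};\hat z_{2i-1})$, i.e.\ \eqref{theorem 8-1} minus $\phi(0)$. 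In both cases the displayed function equals $\phi(0)+h_P(0,z)$ and hence is a potential function of $\Gamma$.

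I do not expect a genuine obstacle: continuity and differentiability are irrelevant, and the substantive work has already been carried out in Theorem~\ref{nec-suff-cond-2-agg-pot}. The only thing requiring care is the index bookkeeping in the pairing identity—verifying that the spectator argument $\hat z_{i-1}$ inserts precisely $z_1,\dots,z_{i-1}$ and zeros elsewhere, so that the terms of $h_{i,i+1}$ line up with consecutive vertices of $P$—together with the parity split, which forces the asymmetric base case ($h_{P_2}$ when $N$ is even, $h_{P_3}$ when $N$ is odd) since an odd number of unit increments cannot be tiled by length-two blocks alone.
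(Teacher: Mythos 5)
Your proof is correct, and it reaches the same formula by a cleaner route than the paper. Both arguments rest on the identity $\phi(z)=\phi(0)+h_P(0,z)$ supplied by Theorem~\ref{nec-suff-cond-2-agg-pot}, and both decompose $h_P(0,z)$ into the two-player blocks $h_{ij}$. The difference is in how the decomposition is justified. The paper proceeds by backward induction: it peels off the last two players using relation~\eqref{theorem-item-3-1} (which it re-derives from the potentiality of the game via Theorem~\ref{nec-suff-cond-2-agg-pot}), then argues that the $(N-2)$-player subgame obtained by freezing those players still satisfies the hypotheses of Theorem~\ref{theorem 7} and is itself potential, and repeats. You instead observe that $h_P(0,z)=\sum_{i=1}^N\bigl(f_i(\hat z_i)-f_i(\hat z_{i-1})\bigr)$ is a telescoping sum and that the pairing identity $h_{i,i+1}(0,0,z_i,z_{i+1};\hat z_{i-1})=f_i(\hat z_i)-f_i(\hat z_{i-1})+f_{i+1}(\hat z_{i+1})-f_{i+1}(\hat z_i)$ is a purely algebraic consequence of definition~\eqref{char-agg-pot-def-1}, requiring no potentiality at all; the parity split is then mere regrouping. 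This buys you two things: the potentiality hypothesis is invoked exactly once (to get $\phi=\phi(0)+h_P(0,\cdot)$) rather than at every stage of an induction, and you avoid having to verify that the restricted subgames remain potential and satisfy the standing assumptions --- a point the paper asserts but does not elaborate. Your index bookkeeping (the spectator profile $\hat z_{i-1}$ aligning the four evaluations in $h_{i,i+1}$ with consecutive vertices of $P$) checks out.
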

\begin{proof}
We prove the statement when $N=2k+1$ for some $k\in \mathbb{N}$.
In the path $P$, let $i=N-1$
and $j=N$, i.e., $i=2k$ and $j=2k+1$. By Theorem~\ref{nec-suff-cond-2-agg-pot}, in a potential game, we have that relation~\eqref{theorem-item-3-1} holds with $y=(0,\ldots,z_{N-1},z_{N})$ and $z=(z_1,\ldots,z_{N-2},0,0)$, implying that
\begin{align}\label{theorem 8-3}
& h_{2k,2k+1}\big(0,0,z_{2k},z_{2k+1};\hat{z}_{2k-1}\big)=h_{P}(z,y)\nonumber \\
    &=h_{P}(0,(z_1,\ldots,z_{N-2},z_{N-1},z_{N}))\nonumber \\
    &\ \ -h_{P}(0,(z_1,\ldots,z_{N-2},0,0)).
\end{align}
Additionally, we know that by Theorem~\ref{nec-suff-cond-2-agg-pot} we have
\begin{align}\label{theorem 8-4}
    &\phi(z_1,\ldots,z_{N-2},z_{N-1},z_{N})\nonumber \\
    &=\phi(0)+h_{P}(0,(z_1,\ldots,z_{N-2},z_{N-1},z_{N})).
\end{align}
Rearranging \eqref{theorem 8-3} and substituting in \eqref{theorem 8-4} we have
\begin{align}\label{theorem 8-5}
    &\phi(z_1,\ldots,z_{N-2},z_{N-1},z_{N})\nonumber \\
    &=\phi(0)+h_{P}(0,(z_1,\ldots,z_{N-2},0,0))\nonumber \\
    &\ \ +h_{N-1,N}\big(0,0,z_{N-1},z_{N};\hat{z}_{N-2}\big).
\end{align}
In \eqref{theorem 8-5}, we have that $N-2$ is again an odd number. Moreover,  the $N-2$-player game, obtained by omitting players $N$ and $N-1$ from game $\Gamma$, satisfies  the conditions of Theorem~\ref{theorem 7}. Hence, the $N-2$-player game is potential, and we can repeat the preceding argument. Continuing this process, we eventually reach~\eqref{theorem 8-1}, which completes the proof for the case when the number $N$ of players is odd.
The proof for the other case is identical and it is omitted.
\end{proof}

While Theorem~\ref{theorem 7} provides a necessary and sufficient condition for a game to be potential,
Theorem~\ref{theorem 8} goes further by providing a systematic way to construct a potential function of the game.
The necessary and sufficient condition in terms of function value differences obtained in Theorem~\ref{theorem 7} is comparable with~\cite[Theorem 4.5]{monderer1996potential} for games with one-dimensional action space and differentiable cost functions. 
However, 
an alternative to Theorem~\ref{theorem 8}, which offers a potential function construction for potential games has not been proposed in the existing literature.


\subsection{Examples}\label{sec:example}
In this subsection, we demonstrate the use of Theorems~\ref{theorem 7} and~\ref{theorem 8} in a 
construction of a potential function in potential games through some examples. 

\textbf{Example 2. (Aggregative Game)} Suppose that we have the following utility functions:
\begin{align*}
    &f_1(x_{1},x_{2},x_{3},x_{4})= (a-b\bar{x})x_1-cx_1, \\
    &f_2(x_{1},x_{2},x_{3},x_{4})= (a-b\bar{x})x_2-cx_2, \\
    &f_3(x_{1},x_{2},x_{3},x_{4})= (a-b\bar{x})x_3-cx_3,\\
    &f_4(x_{1},x_{2},x_{3},x_{4})= (a-b\bar{x})x_4-cx_4,
\end{align*}
where $x_i\in \mathbb{R}$ is the decision variable of player $i$ for $i\in\{1,2,3,4\}$.
Using Theorem~\ref{theorem 8}, we have
\begin{align}\label{example-2}
\phi(x)=\phi(0)&+(a-bx_1)x_1-cx_1\nonumber \\
&+(a-b(x_1+x_2))x_2-cx_2\nonumber \\
&+(a-b(x_1+x_2+x_3))x_3-cx_3\nonumber \\
&+(a-b(x_1+x_2+x_3+x_4))x_4-cx_4.
\end{align}
We have that $\phi(x_1+y_1,x_2,x_3,x_4)-\phi(x_1,x_2,x_3,x_4)=(a-b \bar{x})y_1-bx_1y_1-by_1^2-cy_1$, which can be seen to be equal to the difference  $f_1(x_1+y_1,x_2,x_3,x_4)-f_1(x_1,x_2,x_3,x_4)$.

In aggregative games, instead of $z_{-\{i,j\}}$, the term $\bar{z}_{-\{i,j\}}$ appears in the derivations. Therefore, the conditions in Theorem~\ref{theorem 7}, in particular \eqref{char-agg-pot-def-0}, need only to be satisfied for every fixed strategy $\bar{z}_{-\{i,j\}} \in \bar{K}_{-\{i,j\}}$. 

\textbf{Example 3. (Network Congestion Game)} 
Here, we consider the game of network congestion, which was introduced by Rosenthal in 1973 \cite{C23}. Different aspects of this problem have been examined extensively in the literature, including the complexity of obtaining a solution for maximizing social welfare \cite{C24}, analyzing the price of anarchy and stability \cite{Christodoulou2}, determining whether there is a Nash equilibrium \cite{Facchini,Ui,monderer1996potential}, and checking whether it is a potential game \cite{monderer1996potential}, \cite{C25}. 
This type of games is proven to be potential by
Rosenthal \cite{C23}, who has provided a potential function, and also later by Monderer and Shapely \cite{monderer1996potential}. 
However, no systematic way to obtain the potential function is introduced in the literature. Our Theorems~\ref{theorem 7} and \ref{theorem 8} provide a systematic approach to constructing a potential function for this game. 

For simplicity, we focus on the variant of the problem where $N$ players are departing from a point $O$ in a given network $G(V,E)$, with node set $V$ and edge set $E$, to reach their destination at point $D$, where $O,D\in V$. In this problem, the cost of using link $e\in E$ is equivalent to $C_e(v_e)$, where $v_e$ is the total number of players who choose link $e$ in their path from $O$ to $D$, and $C_e:\mathbb{N}\rightarrow \mathbb{R}$ is the cost of using link $e$, which is a function of the total number of players who choose this link. For the sake of simplicity, let us assume there are $m\in \mathbb{N}$ different paths from the origin $O$ to the destination $D$, and there is no restriction on choosing among them for any player. Therefore, the action space for each player $i$ is $K_i:=\{p_1,p_2,\ldots,p_m\}$ where $p_l$ is the $l$th path available to each player. Hence, the cost function for player $i\in \mathcal{N}=\{1,2,\ldots,N\}$ is \[f_i(x_1,\ldots,x_N)=\sum_{e\in x_i} C_e\big(v_e(x_1,\ldots,x_N)\big),\] 
where the summation is over the links $e$ traversed in a path $x_i$.
Every player aims to minimize his cost function \cite{C23}. 

To apply Theorem~\ref{theorem 7} or Theorem~~\ref{theorem 8}, we need 
to transform the game to an appropriate form. 
Let us assign an arbitrary order to the elements of the action space, i.e., the path collection $\{p_1,\ldots,p_m\}$, so that each player $i$ action set is $A_i:=\{1,2,\ldots,m\}$, as an alternative to set $K_i$. 
Moreover, let us consider another network $G(V',E')$ 
such that $V\cap V'=\{O\}$, $E\cap E'=\emptyset$, and there is a destination $D'\in V'$ that can be reached from $O$ via $m$-different paths.
Additionally, we add a self-loop to node $O$, which is denoted as path $p_0$. 
We now consider a game where each player $i$ has 
action space $A_i:=\{-m,\ldots,-1,0,1,2,\ldots,m\}$, where $-j\in A_i$ denotes the path $p_j$ connecting $O$ to $D'\in G(V',E')$. 
Assume
that $C_e(v_e)$ is very large for all $e\in E'$ in comparison to all $e\in E$ for every possible value of $v_e$. We also choose the cost of using the self-loop of $O$ large enough so that no one has an incentive to choose it. 
 As a result, in this newly constructed game, players can choose a path to reach $D$ on the actual network, $D'$ on the artificial network, or stay at the origin $O$. Clearly, with the costs as described above, nobody has the incentive to move into the artificial network $G(V',E')$. Therefore, this game on an augmented network is identical to the network congestion game on $G(V,E)$. \\ 
We will show that the game on the augmented network is potential and construct a potential function by using Theorem~\ref{theorem 7}, i.e., by using~\eqref{char-agg-pot-def-0}. Let us assume all players apart from $i,j$ have already taken their decision, and $v_e$ is the number of users for any link $e$, excluding players $i,j$. 
Assume player $i$ chooses path $l_i$ and later deviates to path $l_i'$ and, similarly, player $j$ selects $l_j$ and later on deviates to $l_j'$. 
For simplicity, we consider the case when 
there is no link in common among these paths; however, similar arguments can be presented for the case when there are links in common. Therefore, the left-hand side of \eqref{char-agg-pot-def-0} is
\begin{equation} \label{exmp3-1}
 \sum_{e\in l_i'}C_e(v_e+1)-\sum_{e\in l_i}C_e(v_e+1)+\sum_{e\in l_j'}C_e(v_e+1)-\sum_{e\in l_j}C_e(v_e+1).
\end{equation}
When players $i$ and $j$, rather than remaining at the origin, choose to deviate to $l_i'$ and $l_j'$, respectively,
results in a difference in the cost of
\begin{equation}\label{exmp3-2}
 \sum_{e\in l_i'}C_e(v_e+1)+\sum_{e\in l_j'}C_e(v_e+1)-\sum_{e\in l_0}C_e(v_e)-\sum_{e\in l_0}C_k(v_e),
\end{equation}
and similarly, rather than remaining at the origin, deviating to $l_i$ and $l_j$ for players $i,j$ causes a difference in the cost of
\begin{equation}\label{exmp3-3}
     \sum_{e\in l_i}C_e(v_e+1)+\sum_{e\in l_j}C_e(v_e+1)-\sum_{e\in l_0}C_e(v_e)-\sum_{e\in l_0}C_e(v_e).
\end{equation}
Subtracting \eqref{exmp3-3} from \eqref{exmp3-2} results in \eqref{exmp3-1}. This shows by Theorem~\ref{theorem 7} that the game is potential. Note that by checking \eqref{shap_c} for an arbitrary cycle of length $4$, we will see that the right-hand side of \eqref{shap_c} is zero - which alternatively shows that the game is potential. 

To obtain the potential function using Theorem~\ref{theorem 8}, we note that this theorem enables us to load the network incrementally. As immediate result of applying Theorem \ref{theorem 8}, where at each step two players are selected and assigned to their route in the network, 
the following potential function is obtained
\[\phi(x)= \phi(0)-\sum_{k=1}^{N}C_0(k)+\sum_{e\in E}\sum_{k=1}^{v_e(x)}C_e(k),\]
where $\phi(0)$ is some constant, $C_0$ is the cost of using the self loop at the origin, and $v_e(x)$ is the number of users using link $e$ when decision variables are set to $(x_1,\ldots,x_N)$. The non-constant part of this potential function is identical to that which was introduced originally in~\cite{C23}. 
\section{Some Characterization of Ordinal and Generalized Ordinal Potential Games}\label{Main-res-2}
In this section we provide some sufficient conditions for a game to be an ordinal potential game and a generalized ordinal potential game. 
We start by considering  the case of an ordinal potential game by using the following assumption.
\begin{assum}\label{generic-suff-condition}
Consider a game $\Gamma=(\mathcal{N},\{f_i,K_i\}_{i\in\mathcal{N}})$. Assume that for all $i,j\in \mathcal{N}$, $x_i,x_i+y_i \in K_i$, $x_j,x_j+y_j \in K_j$, and $x_{-\{i,j\}}\in K_{-\{i,j\}}$, we have
    \begin{align}\label{fund-condin}
        &f_i(x_i+y_i,x_j+y_j,x_{-\{i,j\}})-f_i(x_i,x_{-i})<0 \iff \nonumber\\
        &f_j(x_j+y_j,x_i+y_i,x_{-\{i,j\}})-f_j(x_j,x_{-j})<0.
    \end{align}
\end{assum}
Under Assumption~\ref{generic-suff-condition}, we have the following result.
\begin{thm}\label{generic-suff-theo}
     Under Assumption~\ref{generic-suff-condition}, 
     a game $\Gamma=(\mathcal{N},\{f_i,K_i\}_{i\in\mathcal{N}})$ is ordinal potential with $f_i(\cdot)$ as ordinal potential function for any $i\in \mathcal{N}$.
\end{thm}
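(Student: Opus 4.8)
The plan is to fix an arbitrary player $i^\star\in\mathcal{N}$ and show that the cost function $\phi:=f_{i^\star}\colon K\to\mathbb{R}$ serves as an ordinal potential function in the sense of Definition~\ref{def_Best_potential}. Concretely, I need to verify that for every player $j\in\mathcal{N}$, every $x_j,x_j'\in K_j$, and every $x_{-j}\in K_{-j}$,
\[
f_j(x_j',x_{-j})-f_j(x_j,x_{-j})<0\ \iff\ \phi(x_j',x_{-j})-\phi(x_j,x_{-j})<0 .
\]

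First I would dispose of the case $j=i^\star$: there the two sides of the equivalence are literally the same expression, so there is nothing to prove. The substantive case is $j\neq i^\star$, and here the key step is to apply Assumption~\ref{generic-suff-condition} to the pair of players $\{i^\star,j\}$ with the deviation of player $i^\star$ taken to be zero. In \eqref{fund-condin} set $i=i^\star$, keep $j$ as the other player, and put $y_{i^\star}=0$, $y_j=x_j'-x_j$, while letting the frozen coordinates $x_{-\{i^\star,j\}}$ be the corresponding components of $x_{-j}$; this is an admissible choice since $x_{i^\star}+0=x_{i^\star}\in K_{i^\star}$ and $x_j+y_j=x_j'\in K_j$. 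Because $x_{i^\star}+y_{i^\star}=x_{i^\star}$, the left-hand side of \eqref{fund-condin} collapses to $f_{i^\star}(x_j',x_{-j})-f_{i^\star}(x_j,x_{-j})=\phi(x_j',x_{-j})-\phi(x_j,x_{-j})$, and the right-hand side collapses to $f_j(x_j',x_{-j})-f_j(x_j,x_{-j})$, after re-collecting the untouched coordinates into $x_{-j}$. Thus \eqref{fund-condin} is exactly the required equivalence for player $j$.

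Since $j$, $x_j$, $x_j'$, and $x_{-j}$ were arbitrary, $\phi=f_{i^\star}$ is an ordinal potential function, and since $i^\star$ was an arbitrary player, the same conclusion holds with any $f_i$ playing the role of the ordinal potential. I do not expect a genuine obstacle in this argument: the only point needing care is the bookkeeping that matches the single-player deviation of Definition~\ref{def_Best_potential} to the simultaneous two-player deviation appearing in Assumption~\ref{generic-suff-condition}, which is resolved by the observation that zeroing one of the two deviations in \eqref{fund-condin} is permitted and yields precisely a one-player-deviation statement.
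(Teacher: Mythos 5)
Your proposal is correct and follows essentially the same route as the paper: both arguments specialize Assumption~\ref{generic-suff-condition} by setting the deviation of the designated player to zero ($y_i=0$), which turns the two-player-deviation equivalence into exactly the single-player-deviation equivalence required by Definition~\ref{def_Best_potential} with $\phi=f_i$. Your version is slightly more careful in explicitly disposing of the trivial case $j=i^\star$ and in the bookkeeping of coordinates, but the key idea is identical.
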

\begin{proof}
    Since~\eqref{fund-condin} holds for any $y_i$ such that $x_i+y_i \in K_i$, we can consider $y_i=0$. Therefore, for any $j\in \mathcal{N}$ we have
     \begin{align*}
        &f_i(x_i,x_j+y_j,x_{-\{i,j\}})-f_i(x_i,x_{-i})<0 \iff \nonumber\\
        &f_j(x_j+y_j,x_i,x_{-\{i,j\}})-f_j(x_j,x_{-j})<0.
    \end{align*}
    This means by Definition~\ref{def_Best_potential} the game $\Gamma=(\mathcal{N},\{f_i,K_i\}_{i\in\mathcal{N}})$ is ordinal potential with ordinal potential function $f_i(\cdot)$ for $i\in \mathcal{N}$.
\end{proof}

When the player's cost functions are twice continuously differentiable and the action space of each player is one dimensional, Assumption~\ref{generic-suff-condition} is equivalent to 
    \begin{equation} \label{local-p}
        \nabla^2_{x_i,x_j} f_i(x)<0 \iff \nabla^2_{x_j,x_i}f_j(x)<0\qquad\hbox{for all }x\in K,
    \end{equation}

In~\cite{ewerhart2020ordinal} 
an analogous local version of~\eqref{local-p} has been shown as a necessary condition for a game to be generalized ordinal potential. In particular, for a game with smooth costs to be generalized ordinal potential, it is necessary that 
following relation holds
\begin{equation} \label{local-p-NE}
    \nabla^2_{x_i,x_j} f_i(x^*)\cdot \nabla^2_{x_j,x_i}f_j(x^*)\geq 0,
    \end{equation}
for all points $x^*$ in the joint action space satisfying $\nabla_{x_i} f_i(x^*)= \nabla_{x_j} f_j(x^*)=0$ for all $i,j$. 
Our Theorem~\ref{generic-suff-theo} provides a sufficient condition for such a game to be ordinal potential.
Specifically, by Theorem~\ref{generic-suff-theo}, we have that if~\eqref{local-p} holds globally, then the game is ordinal potential.

Next, we derive a sufficient condition for existence of a generalized ordinal potential for a game with strongly convex cost functions with Lipschitz continous gradients.
Such functions can be 
characterized by the first-order condition, as follows.
Let $X\subseteq\mathbb{R}^n$ be a nonempty convex set and
$f:X\to \mathbb{R}$ be a continuously differentiable function on $X$. Then, $f(\cdot)$ 
is strongly convex on $X$ with a constant $\eta>0$ if and only if 
the following relation holds  in~\cite{boyd2004convex},\cite{Bertsekas}: 
for all $x,y\in X$,
\begin{equation}\label{strict-conv-func}
f(y) \geq f(x)+\langle \nabla f(x),y-x \rangle+\frac{\eta}{2}\|y-x\|^2.
\end{equation}
A continuously differentiable function
$f(\cdot)$ 
is strictly convex on a convex set $X$ if and only if
the following relation holds~\cite{boyd2004convex},\cite{Bertsekas}: 
for all $x,y\in X$, with $x\ne y$, 
\begin{equation}\label{strong-conv-func}
f(y) > f(x)+\langle \nabla f(x),y-x \rangle.
\end{equation}

A function $f(\cdot)$ has Lipschitz continuous gradients on a set $X$ with a constant $L>0$ when the following relation holds:  $$\|\nabla f(x)-\nabla f(y)\|\leq L\|x-y\|
\qquad\hbox{for all $x,y \in X $}.
$$
When a function $f(\cdot)$ has Lipschitz continuous gradients on a convex set $X$, with a constant $L>0$, then  the following inequality is valid
{\cite{nesterov2018lectures},\cite{beck2014introduction}}: for all $x,y\in X$,
\begin{align}\label{Grad-Lip-func}
f(y) \leq f(x)+\langle \nabla f(x),y-x \rangle+\frac{L}{2}\|y-x\|^2.
\end{align}

\begin{thm}\label{final-the}
    Consider a game $\Gamma=(\mathcal{N},\{f_i,K_i\}_{i\in\mathcal{N}})$
     where each strategy set $K_i$ is convex and
     each cost function $f_i(\cdot,x_{-i})$ is strongly convex for all $x_{-i}\in K_i$ with a constant $\eta_i$. 
     Assume there exists a 
     differentiable function $\phi(\cdot)$ such that:
     \begin{itemize}
        \item[(a)] For every $i\in\mathcal{N}$, $x_i,y_i\in K_i$, and $x_{-i}\in K_{-i}$, if $\la \nabla_{x_i} f_i(x_i,x_{-i}),y_i-x_i\ra <0$, then
        \[\la \nabla_{x_i} \phi(x_i,x_{-i}),y_i-x_i\ra
        \le \la \nabla_{x_i} f_i(x_i,x_{-i}),y_i-x_i\ra.\]
        \item[(b)] The function $\phi(\cdot)$
        has a Lipschitz continuous gradients with a constant  $L>0$ such that
    \[L\le \min_{1\le i\le N} \eta_{i}.\]
     \end{itemize}
 Then, the game is a generalized ordinal potential game with a generalized ordinal potential function $\phi(\cdot)$.
\end{thm}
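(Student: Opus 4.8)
The plan is to verify directly the implication required by Definition~\ref{def_Gen_ord_potential}, taking the given $\phi$ as the candidate generalized ordinal potential function. Fix a player $i\in\mathcal{N}$, strategies $x_i,x_i'\in K_i$, and $x_{-i}\in K_{-i}$, and suppose $f_i(x_i',x_{-i})-f_i(x_i,x_{-i})<0$; I must show $\phi(x_i',x_{-i})-\phi(x_i,x_{-i})<0$. Note that this hypothesis forces $x_i'\neq x_i$, since otherwise the left-hand side would be zero. (Implicitly, and in order for hypothesis (a) and the first-order characterization~\eqref{strict-conv-func} of strong convexity to make sense, each $f_i$ is assumed differentiable in $x_i$.)

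First I would extract a strictly negative directional derivative from the assumed strict decrease in cost. Applying~\eqref{strict-conv-func} to $f_i(\cdot,x_{-i})$ with $y=x_i'$ and $x=x_i$,
\[f_i(x_i',x_{-i})\ge f_i(x_i,x_{-i})+\la\nabla_{x_i}f_i(x_i,x_{-i}),x_i'-x_i\ra+\tfrac{\eta_i}{2}\|x_i'-x_i\|^2,\]
so, combining with $f_i(x_i',x_{-i})-f_i(x_i,x_{-i})<0$,
\[\la\nabla_{x_i}f_i(x_i,x_{-i}),x_i'-x_i\ra<-\tfrac{\eta_i}{2}\|x_i'-x_i\|^2<0.\]
Since this inner product is strictly negative, hypothesis (a) applies with $y_i=x_i'$ and gives
\[\la\nabla_{x_i}\phi(x_i,x_{-i}),x_i'-x_i\ra\le\la\nabla_{x_i}f_i(x_i,x_{-i}),x_i'-x_i\ra.\]

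Next I would bound the change in $\phi$ from above. Because $\phi$ has an $L$-Lipschitz gradient and the points $(x_i',x_{-i})$ and $(x_i,x_{-i})$ differ only in the $x_i$-block, the partial map $x_i\mapsto\nabla_{x_i}\phi(x_i,x_{-i})$ is $L$-Lipschitz on the convex set $K_i$; hence the descent inequality~\eqref{Grad-Lip-func} applied to $\phi(\cdot,x_{-i})$ yields
\[\phi(x_i',x_{-i})\le\phi(x_i,x_{-i})+\la\nabla_{x_i}\phi(x_i,x_{-i}),x_i'-x_i\ra+\tfrac{L}{2}\|x_i'-x_i\|^2.\]
Chaining the two preceding displays gives $\la\nabla_{x_i}\phi(x_i,x_{-i}),x_i'-x_i\ra<-\tfrac{\eta_i}{2}\|x_i'-x_i\|^2$, and substituting this into the descent inequality,
\[\phi(x_i',x_{-i})<\phi(x_i,x_{-i})-\tfrac{\eta_i}{2}\|x_i'-x_i\|^2+\tfrac{L}{2}\|x_i'-x_i\|^2=\phi(x_i,x_{-i})+\tfrac{L-\eta_i}{2}\|x_i'-x_i\|^2.\]
By hypothesis (b), $L\le\min_{1\le k\le N}\eta_k\le\eta_i$, so the last term is nonpositive and $\phi(x_i',x_{-i})<\phi(x_i,x_{-i})$. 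As $i$, $x_i$, $x_i'$, $x_{-i}$ were arbitrary, Definition~\ref{def_Gen_ord_potential} shows $\phi$ is a generalized ordinal potential function of $\Gamma$.

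I do not expect a genuine obstacle here: the proof is a chain of three inequalities. The one point needing care is that hypothesis (a) can only be invoked once $\la\nabla_{x_i}f_i(x_i,x_{-i}),x_i'-x_i\ra$ is known to be \emph{strictly} negative, whereas plain convexity of $f_i$ would deliver only nonpositivity. Strong convexity is what upgrades the strict decrease in function value into a strictly negative directional derivative while also supplying the quadratic margin $\tfrac{\eta_i}{2}\|x_i'-x_i\|^2$; that margin is exactly what absorbs the $\tfrac{L}{2}\|x_i'-x_i\|^2$ overshoot of the descent lemma, which is why the comparison $L\le\min_i\eta_i$ in (b) is the precisely right hypothesis.
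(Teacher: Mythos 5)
Your proof is correct and follows essentially the same route as the paper's: strong convexity upgrades the strict cost decrease to a strictly negative directional derivative with a quadratic margin $\tfrac{\eta_i}{2}\|x_i'-x_i\|^2$, hypothesis (a) transfers this to $\nabla_{x_i}\phi$, and the descent inequality~\eqref{Grad-Lip-func} together with $L\le\eta_i$ closes the argument. The only differences are notational ($x_i'$ versus $y_i$) and your slightly more explicit remark that $x_i'\ne x_i$ is forced, which the paper assumes without comment.
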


\begin{proof}
Let $i\in \mathcal{N}$ be an arbitrary player. Consider
    $x_i,y_i\in K_i$ with $x_i\ne y_i$ and
    $x_{-i}\in K_{-i}$, and assume that 
    \[f_i(y_i,x_{-i}) - f_i(x_i,x_{-i})<0.\]
    By the strong convexity of $f_i(\cdot)$, we have for 
    $x=(x_i,x_{-i})\in K$ and $y=(y_i,x_{-i})\in K$,
    \[f_i(x)+\la\nabla_{x_i} f(x),y_i-x_i\ra +\frac{\eta_i}{2}\|y_i-x_i\|^2\le f_i(y_i,x_{-i}).\]
    By combining the preceding two relations, we find that
\begin{equation}
    \label{eq-less1}
\la\nabla_{x_i} f(x),y_i-x_i\ra +\frac{\eta_i}{2}\|y_i-x_i\|^2 < 0.
\end{equation}
In view of the preceding relation, it follows that 
\begin{equation}
    \label{eq-less0}
    \la\nabla_{x_i} f(x),y_i-x_i\ra<0.
    \end{equation}

On the other hand by our assumption the function $\phi(\cdot)$ has Lipschitz continuous gradients and, thus, satisfies for $x=(x_i,x_{-i})$
and $y=(y_i,x_{-i})$,
    \begin{align*}
        \phi(y)
        &\leq \phi(x)+\langle \nabla \phi(x),y-x\rangle +\frac{L}{2}\|y-x\|^2 \cr
        &=\phi(x)+\langle \nabla_{x_i}\phi(x),y_i-x_i\rangle +\frac{L}{2}\|y_i-x_i\|^2.
        \end{align*}
By relation~\eqref{eq-less0} and the assumed property (a) for the function $\phi(\cdot)$, it follows that
\begin{align*} 
        \phi(y)
&\le \phi(x)+\langle\nabla_{x_i} f_i(x),y_i-x_i\rangle+\frac{L}{2}\|y_i-x_i\|^2\cr
&\le \phi(x)+\langle\nabla_{x_i} f_i(x),y_i-x_i\rangle+\frac{\eta_i}{2}\|y_i-x_i\|^2\cr
&<\phi(x),
    \end{align*}
    where the second inequality follows by $L\le \eta_i$ for all $i$ (see the assumed property (b)), while the last inequality follows from~\eqref{eq-less1}.
    Therefore, $\phi(\cdot)$  is a generalized ordinal potential function for the game $\Gamma$.
\end{proof}
\def\a{\alpha}

We next consider a variant of
Theorem~\ref{final-the}, where we relax the strong convexity of the cost functions. In this case, we will require the existence of a function $\phi(\cdot)$ that is concave over the set $K$, i.e.,
for all $x,y\in K$, and $\a\in[0,1]$,
\[\phi(\a x+(1-\a)y)\ge \a \phi(x)+(1-\a)\phi(y).\]

In the sequel, we will need a concept of a subgradient of a concave function $h(\cdot).$
We say that a vector $s$ is a subgradient of a concave function 
$h(\cdot)$ at a point $x$, if the following relation holds:
\[h(y)\le h(x) +\la s,y-x\ra \quad\hbox{for all $y\in{\rm dom}(h)$},\]
where ${\rm dom}(h)$ denotes the domain of the function $h(\cdot)$.

For a game with a strictly convex functions in players' decision variables, we have the following result.
\begin{thm}\label{final-the-one}
    Consider a game $\Gamma=(\mathcal{N},\{f_i,K_i\}_{i\in\mathcal{N}})$
     where each strategy set $K_i$ is convex and
     each cost function $f_i(\cdot,x_{-i})$ is strictly convex over the set $K_i$ for every $x_{-i}\in K_{-i}$. 
     Assume that there exists a 
     function $\phi(\cdot)$ on the set $K$ that is concave in $x_i\in K_i$ and has a subgradient at each $x_i\in K_i$, for every $x_{-i}\in K_{-i}$. 
     Moreover, 
     for every $i\in\mathcal{N}$, $x_i,y_i\in K_i$, and $x_{-i}\in K_{-i}$, if $\la \nabla_{x_i} f_i(x_i,x_{-i}),y_i-x_i\ra <0$, there exists a subgradient $s_i(x_i,x_{-i})$ of $\phi(\cdot,x_{-i})$ at $x_i\in K_i$  satisfying
        \[\la s_i(x_i,x_{-i}),y_i-x_i\ra
        \le \la \nabla_{x_i} f_i(x_i,x_{-i}),y_i-x_i\ra.\]
        Then, $\phi(\cdot)$ is a generalized ordinal potential for the game.
     \end{thm}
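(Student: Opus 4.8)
The plan is to mirror the argument of Theorem~\ref{final-the}, but to replace the Lipschitz-gradient/strong-convexity comparison (which needed a quadratic correction term) with a strict-convexity/concavity comparison that needs none. Fix an arbitrary player $i\in\mathcal{N}$, points $x_i,y_i\in K_i$ with $x_i\ne y_i$, and $x_{-i}\in K_{-i}$, and suppose $f_i(y_i,x_{-i})-f_i(x_i,x_{-i})<0$; the goal is to deduce $\phi(y_i,x_{-i})-\phi(x_i,x_{-i})<0$, which by Definition~\ref{def_Gen_ord_potential} is exactly what makes $\phi$ a generalized ordinal potential.

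First I would extract a first-order consequence of the cost decrease. Since $f_i(\cdot,x_{-i})$ is strictly convex on the convex set $K_i$, the first-order characterization~\eqref{strong-conv-func} gives $f_i(y_i,x_{-i})>f_i(x_i,x_{-i})+\langle\nabla_{x_i}f_i(x_i,x_{-i}),y_i-x_i\rangle$. Combining this with the hypothesis $f_i(y_i,x_{-i})<f_i(x_i,x_{-i})$ yields $\langle\nabla_{x_i}f_i(x_i,x_{-i}),y_i-x_i\rangle<0$.

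Next I would invoke the assumed property of $\phi$. Because $\langle\nabla_{x_i}f_i(x_i,x_{-i}),y_i-x_i\rangle<0$, there is a subgradient $s_i(x_i,x_{-i})$ of the concave function $\phi(\cdot,x_{-i})$ at $x_i$ with $\langle s_i(x_i,x_{-i}),y_i-x_i\rangle\le\langle\nabla_{x_i}f_i(x_i,x_{-i}),y_i-x_i\rangle<0$. Applying the defining subgradient inequality for a concave function, namely $\phi(y_i,x_{-i})\le\phi(x_i,x_{-i})+\langle s_i(x_i,x_{-i}),y_i-x_i\rangle$, and using the strict negativity just obtained, gives $\phi(y_i,x_{-i})<\phi(x_i,x_{-i})$. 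Since $i$, $x_i$, $y_i$, $x_{-i}$ were arbitrary, $\phi(\cdot)$ is a generalized ordinal potential for $\Gamma$.

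As for obstacles: there is essentially no heavy computation — the argument is a short chain of inequalities — so the only real care needed is bookkeeping of inequality directions. The step most worth double-checking is that strict convexity genuinely delivers the \emph{strict} inequality $\langle\nabla_{x_i}f_i(x_i,x_{-i}),y_i-x_i\rangle<0$ rather than merely $\le 0$ (this is precisely why one assumes $x_i\ne y_i$ and strict, not ordinary, convexity), together with the fact that a subgradient of a concave function satisfies the ``upper'' inequality used above; neither point is delicate, but both are where a sign slip would creep in.
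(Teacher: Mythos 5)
Your proof is correct and follows essentially the same route as the paper's: derive $\langle\nabla_{x_i}f_i(x_i,x_{-i}),y_i-x_i\rangle<0$ from strict convexity and the cost decrease, then chain the concave-subgradient inequality with the assumed comparison between $s_i$ and $\nabla_{x_i}f_i$ to get $\phi(y_i,x_{-i})<\phi(x_i,x_{-i})$. The care you note about the strict inequality from strict convexity (requiring $x_i\ne y_i$, which the hypothesis $f_i(y_i,x_{-i})<f_i(x_i,x_{-i})$ already forces) matches the paper's treatment exactly.
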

 \begin{proof}
Let $i\in \mathcal{N}$ be an arbitrary player. Consider
    $x_i,y_i\in K_i$ with $x_i\ne y_i$ and
    $x_{-i}\in K_{-i}$, and let
    \[f_i(y_i,x_{-i}) - f_i(x_i,x_{-i})<0.\]
    By the strict convexity of $f_i(\cdot)$, we have for 
    $x=(x_i,x_{-i})\in K$ and $y=(y_i,x_{-i})\in K$,
    \[f_i(x)+\la\nabla_{x_i} f(x),y_i-x_i\ra < f_i(y_i,x_{-i}).\]
    The preceding two relations yield
    \begin{equation}
    \label{eq-less11}
\la\nabla_{x_i} f(x),y_i-x_i\ra < 0.
\end{equation}

   By our assumption the function $\phi(\cdot)$ is concave and has  subgradients on the set $K$, so it satisfies the following relation: for $x=(x_i,x_{-i})\in K$, and $y=(y_i,x_{-i})\in K$, and some subgradient $s_i(x)$ of $\phi(\cdot,x_{-i})$ at the point $x$, so that
    \begin{align*}
        \phi(y)
        &\leq \phi(x) +\langle s_i(x),y_i-x_i\rangle.
        \end{align*}
By relation~\eqref{eq-less11} and the assumed property for the function $\phi(\cdot,x_{-i})$, it follows that
\begin{align*} 
        \phi(y)
&\le \phi(x)+\langle\nabla_{x_i} f_i(x),y_i-x_i\rangle\cr
&<\phi(x).
 \end{align*}
    Hence, $\phi(\cdot)$  is an ordinal potential function for the game $\Gamma$.
\end{proof}
    
The next example is designed to show that the sufficient condition obtained in Theorem~\ref{final-the-one} can capture nontrivial generalized ordinal potential games.

\textbf{Example 4.}  Consider following cost functions
\begin{align*}
    &f_1(x_1,x_2)=(x_1+x_2)^2 \qquad\hbox{for $x_1,x_2\in(0,1]$},\cr
    &f_2(x_1,x_2)=(x_1+x_2)^6\qquad \hbox{for $x_1,x_2\in(0,1]$}.
\end{align*}
By Theorem~\ref{generic-suff-theo}, we can verify that  the game is ordinal potential, where both $\phi_1(x_1,x_2)=(x_1+x_2)^6$ and $\phi_2(x_1,x_2)=(x_1+x_2)^2$ are
ordinal potential functions.
However, we will construct a generalized ordinal potential function $\phi(\cdot)$ 
based on Theorem~\ref{final-the-one}.
First, 
we note that the sets $K_i=(0,1]$, $i=1,2,$ are convex and each cost function $f_i(\cdot, x_{-i})$ is strictly convex on $K_i$. Next, for the partial derivatives $\nabla_if_i(x_1,x_2)$ we have
\[\nabla_{x_1} f_1(x_1,x_2)=2(x_1+x_2)>0
\qquad \hbox{for all }x_1,x_2\in(0,1],\]
\[\nabla_{x_2} f_2(x_1,x_2)=6(x_1+x_2)^5>0
\qquad \hbox{for all }x_1,x_2\in(0,1].\]
Now, consider the function $\phi(\cdot)$ of the form:
\[\phi(x_1,x_2)=a\sqrt{x_1}+ b\sqrt{x_2},\ 
\hbox{for all }x_1,x_2\in(0,1],\]
where $a,b>0$. This function is concave, so it is concave in each of the variables individually.
Next, we choose the coefficients $a,b>0$ so that the condition on the partial gradients in Theorem~\ref{final-the-one} is satisfied. Since 
\[\nabla_{x_1}\phi(x_1,x_2)=\frac{a}{2\sqrt{x_1}}, \qquad
\nabla_{x_2}\phi(x_1,x_2)= \frac{b}{2\sqrt{x_2}},\]
we have that 
\[\nabla_{x_1}\phi(x_1,x_2)\ge \frac{a}{2}, \quad
\nabla_{x_2}\phi(x_1,x_2)\ge \frac{b}{2},
\quad
\hbox{for }x_1,x_2\in(0,1].\]
By choosing $a\ge 8$ and $b\ge 6\cdot 2^6$,
we can see that for all $x_1,x_2\in(0,1]$,
\[\nabla_{x_1}\phi(x_1,x_2)\ge 4\ge \nabla_{x_1}f_1(x_1,x_2),\]
\[\nabla_{x_2}\phi(x_1,x_2)\ge 6\cdot 2^5
\ge \nabla_{x_1}f_1(x_1,x_2).\]
To have $\la \nabla_{x_i} f_i(x_i,x_{-i}),y_i-x_i\ra <0$ for $i\in \{1,2\}$, the term $y_i-x_i$ must be negative. As a result, we can see that for all $x_1,x_2\in(0,1]$,
\[\la\nabla_{x_1}\phi(x_1,x_2),y_1-x_1\ra\le \la\nabla_{x_1}f_1(x_1,x_2),y_1-x_1\ra,\]
\[\la\nabla_{x_2}\phi(x_1,x_2),y_2-x_2\ra\le \la\nabla_{x_1}f_1(x_1,x_2),y_2-x_2\ra,\]
implying that the condition on the partial gradients in Theorem~\ref{final-the-one} is satisfied. Thus, $\phi(\cdot)$ is a generalized ordinal potential for the game.

For a game with a strictly convex functions in players' decision variables, we also have the following result.
\begin{thm}\label{final-the-two}
    Consider a game $\Gamma=(\mathcal{N},\{f_i,K_i\}_{i\in\mathcal{N}})$
     where each strategy set $K_i$ is convex and
     each cost function $f_i(\cdot,x_{-i})$ is strictly convex over the set $K_i$ for every $x_{-i}\in K_{-i}$. 
     Assume that there exists a 
     function $\phi(\cdot)$ on the set $K$ that is concave in $x_i\in K_i$ and has a subgradient at each $x_i\in K_i$, for every $x_{-i}\in K_{-i}$. 
     Moreover, 
     for every $i\in\mathcal{N}$, $x_i,y_i\in K_i$, and $x_{-i}\in K_{-i}$, if $\la \nabla_{x_i} f_i(x_i,x_{-i}),y_i-x_i\ra <0$, there exists a subgradient $s_i(x_i,x_{-i})$ of $\phi(\cdot,x_{-i})$ at $x_i\in K_i$ and some scalar function $\alpha_i(x)$ with $\alpha_i(x)>0$ for every $x\in K$  satisfying
        \[\la s_i(x_i,x_{-i}),y_i-x_i\ra
        \le \la \alpha_i(x)\nabla_{x_i} f_i(x_i,x_{-i}),y_i-x_i\ra.\]
        Then, $\phi(\cdot)$ is a generalized ordinal potential for the game.
     \end{thm}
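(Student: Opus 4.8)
The plan is to mirror the argument of Theorem~\ref{final-the-one} almost verbatim, with one small twist: multiplying a negative inner product by a positive scalar keeps it negative, so the scaling factor $\alpha_i(x)$ costs us nothing. First I would fix an arbitrary player $i\in\mathcal{N}$, take $x_i,y_i\in K_i$ with $x_i\ne y_i$ and $x_{-i}\in K_{-i}$, and assume $f_i(y_i,x_{-i})-f_i(x_i,x_{-i})<0$. Writing $x=(x_i,x_{-i})\in K$ and $y=(y_i,x_{-i})\in K$, the strict convexity of $f_i(\cdot,x_{-i})$ gives
\[
f_i(x)+\la\nabla_{x_i} f_i(x),y_i-x_i\ra < f_i(y_i,x_{-i}),
\]
and combining this with the standing assumption yields $\la\nabla_{x_i} f_i(x),y_i-x_i\ra<0$.

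Next I would invoke the positivity of $\alpha_i(\cdot)$: since $\alpha_i(x)>0$ for every $x\in K$, the inequality $\la\nabla_{x_i} f_i(x),y_i-x_i\ra<0$ immediately upgrades to $\la\alpha_i(x)\nabla_{x_i} f_i(x),y_i-x_i\ra<0$. This is the only place where the hypothesis on $\alpha_i$ is used, and it is exactly the step that makes the statement reduce to the earlier one. In particular the hypothesis of the theorem — the existence of a subgradient $s_i(x_i,x_{-i})$ of $\phi(\cdot,x_{-i})$ at $x_i$ with $\la s_i(x_i,x_{-i}),y_i-x_i\ra\le\la\alpha_i(x)\nabla_{x_i} f_i(x_i,x_{-i}),y_i-x_i\ra$ — is triggered precisely because $\la\nabla_{x_i} f_i(x),y_i-x_i\ra<0$, which we have just established.

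Finally I would use concavity of $\phi(\cdot,x_{-i})$ together with its subgradient inequality:
\[
\phi(y)\le\phi(x)+\la s_i(x_i,x_{-i}),y_i-x_i\ra
\le \phi(x)+\la\alpha_i(x)\nabla_{x_i} f_i(x),y_i-x_i\ra
<\phi(x),
\]
where the second inequality is the assumed subgradient bound and the strict inequality is the upgraded relation from the previous paragraph. Thus $f_i(y_i,x_{-i})-f_i(x_i,x_{-i})<0$ implies $\phi(y_i,x_{-i})-\phi(x_i,x_{-i})<0$, which is exactly Definition~\ref{def_Gen_ord_potential}, so $\phi(\cdot)$ is a generalized ordinal potential function for $\Gamma$.

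I do not anticipate any real obstacle: the argument is routine once the key observation (positivity of $\alpha_i$ preserves the sign of the inner product) is isolated. The only thing to be careful about is bookkeeping — making sure the subgradient $s_i$ is taken at the correct point and that the perturbation $y-x$ has a vanishing $x_{-i}$-block so that partial gradients and subgradients in $x_i$ suffice throughout; none of this requires new ideas beyond those already in the proof of Theorem~\ref{final-the-one}.
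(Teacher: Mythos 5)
Your proposal is correct and follows essentially the same route as the paper's own proof: strict convexity plus the assumed cost decrease gives $\la\nabla_{x_i} f_i(x),y_i-x_i\ra<0$, the concavity/subgradient inequality bounds $\phi(y)-\phi(x)$ by $\la s_i(x),y_i-x_i\ra$, and the hypothesized bound together with $\alpha_i(x)>0$ closes the chain. Your write-up is in fact slightly more explicit than the paper's in isolating where the positivity of $\alpha_i$ is used.
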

 \begin{proof}
Let $i\in \mathcal{N}$ be an arbitrary player. Consider
    $x_i,y_i\in K_i$ with $x_i\ne y_i$ and
    $x_{-i}\in K_{-i}$, and let
    \[f_i(y_i,x_{-i}) - f_i(x_i,x_{-i})<0.\]
    By the strict convexity of $f_i(\cdot)$, we have for 
    $x=(x_i,x_{-i})\in K$ and $y=(y_i,x_{-i})\in K$,
    \[f_i(x)+\la\nabla_{x_i} f(x),y_i-x_i\ra < f_i(y_i,x_{-i}).\]
    The preceding two relations yield
    \begin{equation}
    \label{eq-less111}
\la\nabla_{x_i} f(x),y_i-x_i\ra < 0.
\end{equation}

   By our assumption the function $\phi(\cdot)$ is concave and has  subgradients on the set $K$, so it satisfies the following relation: for $x=(x_i,x_{-i})\in K$, and $y=(y_i,x_{-i})\in K$, and some subgradient $s_i(x)$ of $\phi(\cdot,x_{-i})$ at the point $x$, so that
    \begin{align*}
        \phi(y)
        &\leq \phi(x) +\langle s_i(x),y_i-x_i\rangle.
        \end{align*}
By relation~\eqref{eq-less111} and the assumed property for the function $\phi(\cdot,x_{-i})$, it follows that
\begin{align*} 
        \phi(y)
&\le \phi(x)+\langle\alpha_i(x)\nabla_{x_i} f_i(x),y_i-x_i\rangle\cr
&<\phi(x).
 \end{align*}
    Hence, $\phi(\cdot)$  is an ordinal potential function for the game $\Gamma$.
\end{proof}
    
The next example is designed to show that the sufficient condition obtained in Theorem~\ref{final-the-two} can capture nontrivial generalized ordinal potential games.

\textbf{Example 5.}  Consider following cost functions
\begin{align*}
    &f_1(x_1,x_2)=(x_1+x_2)^2 \qquad\hbox{for $x_1,x_2\in(0,\infty]$},\cr
    &f_2(x_1,x_2)=(x_1+x_2)^6\qquad \hbox{for $x_1,x_2\in(0,\infty]$}.
\end{align*}
We will construct a generalized ordinal potential function $\phi(\cdot)$ 
based on Theorem~\ref{final-the-two}. We can simply check that a generalized ordinal potential function $\phi(x)=2(x_1+x_2)^{0.4}$ along with $\alpha_1(x)=\frac{4}{10(x_1+x_2)^{1.6}}$ and $\alpha_2(x)=\frac{4}{30(x_1+x_2)^{5.6}}$ satisfy the conditions in Theorem~\ref{final-the-two}. Thus, $\phi(\cdot)$ is a generalized ordinal potential for the game.
\section{Conclusions}\label{sec:conclusion}
In this paper, we first derived a necessary and sufficient condition for games to fall under the category of potential games. These conditions do not
require any differentiation or integration processes in constrast to
those obtained in \cite{Hwang}. We stepped further and simplified the general criteria we obtained for potential games for the class of aggregative games. This relation completely describes aggregative potential games in terms of every two players' cost functions and coupling behavior. We checked the condition through a $3$-player Cournot game. 

Under the assumptions of action space containing zero or the cost functions being
definable on the entire $\mathbb{R}^{\bar{n}}$, we can find a useful closed form expression for the potential function in terms of those functions constructed for the evaluation to be potential game. We also examined the form of potential function for potential games through an example of $4$-player aggregative game, as well as an example of a network congestion game. 

Moreover, we proposed some characterizations of ordinal potential games and generalized ordinal potential games. In comparison to the results of \cite{ewerhart2020ordinal} on local necessary conditions for smooth games to be ordinal potential, we obtained analogous global sufficient condition for games with possibly non smooth cost functions to be ordinal potential.  Additionally, we proposed some characterization of generalized ordinal potential games where cost functions are strongly/strictly convex in terms of their player decision variable. We also provide an example for a $2$-player ordinal potential game.

\bibliographystyle{plain}        
\bibliography{main} 

\begin{thebibliography}{10}

\bibitem{DarenAcemoghlou}
Daron Acemoglu and Martin~Kaae Jensen.
\newblock Aggregate comparative statics.
\newblock {\em Games and Economic Behavior}, 81:27--49, 2013.

\bibitem{Adler}
Nicole Adler, Amir Brudner, and Stef Proost.
\newblock A review of transport market modeling using game-theoretic principles.
\newblock {\em European Journal of Operational Research}, 291(3):808--829, 2021.

\bibitem{Arefi0}
Sina Arefizadeh, Sadegh Arefizadeh, S~Rasoul Etesami, and Sadegh Bolouki.
\newblock Robustness of dynamics in games: A contraction mapping decomposition approach.
\newblock {\em Automatica}, 155:111142, 2023.

\bibitem{Arefi}
Sina Arefizadeh and Angelia Nedi{\'c}.
\newblock Characterization of potential games: Application in aggregative games.
\newblock In {\em 2023 59th Annual Allerton Conference on Communication, Control, and Computing (Allerton)}, pages 1--6. IEEE, 2023.

\bibitem{C10}
Jorge Barrera and Alfredo Garcia.
\newblock Dynamic incentives for congestion control.
\newblock {\em IEEE Transactions on Automatic Control}, 60(2):299--310, 2014.

\bibitem{beck2014introduction}
Amir Beck.
\newblock {\em Introduction to nonlinear optimization: Theory, algorithms, and applications with MATLAB}.
\newblock SIAM, 2014.

\bibitem{Bertsekas}
Dimitri~P. Bertsekas, Angelia Nedi\'c, and Asuman~E. Ozdaglar.
\newblock {\em Convex analysis and optimization}, volume~1.
\newblock Athena Scientific, 2003.

\bibitem{boyd2004convex}
Stephen~P. Boyd and Lieven Vandenberghe.
\newblock {\em Convex optimization}.
\newblock Cambridge university press, 2004.

\bibitem{candogan2011flows}
Ozan Candogan, Ishai Menache, Asuman Ozdaglar, and Pablo~A Parrilo.
\newblock Flows and decompositions of games: Harmonic and potential games.
\newblock {\em Mathematics of Operations Research}, 36(3):474--503, 2011.

\bibitem{Christodoulou2}
George Christodoulou and Martin Gairing.
\newblock Price of stability in polynomial congestion games.
\newblock {\em ACM Transactions on Economics and Computation (TEAC)}, 4(2):1--17, 2015.

\bibitem{deb2008interdependent}
Rahul Deb.
\newblock Interdependent preferences, potential games and household consumption.
\newblock {\em Potential Games and Household Consumption (January 15, 2008)}, 2008.

\bibitem{ewerhart2020ordinal}
Christian Ewerhart.
\newblock Ordinal potentials in smooth games.
\newblock {\em Economic Theory}, 70(4):1069--1100, 2020.

\bibitem{Facchini}
Giovanni Facchini, Freek van Megen, Peter Borm, and Stef Tijs.
\newblock Congestion models and weighted bayesian potential games.
\newblock {\em Theory and Decision}, 42:193--206, 1997.

\bibitem{C11}
Xiao Fang, Guanghui Wen, Jialing Zhou, Jinhu L{\"u}, and Guanrong Chen.
\newblock Distributed nash equilibrium seeking for aggregative games with directed communication graphs.
\newblock {\em IEEE Transactions on Circuits and Systems I: Regular Papers}, 69(8):3339--3352, 2022.

\bibitem{C7}
Sergio Grammatico, Basilio Gentile, Francesca Parise, and John Lygeros.
\newblock A mean field control approach for demand side management of large populations of thermostatically controlled loads.
\newblock In {\em 2015 European Control Conference (ECC)}, pages 3548--3553. IEEE, 2015.

\bibitem{Hwang}
Sung-Ha Hwang and Luc Rey-Bellet.
\newblock Strategic decompositions of normal form games: Zero-sum games and potential games.
\newblock {\em arXiv preprint arXiv:1602.06648}, 2016.

\bibitem{C4}
Martin~Kaae Jensen.
\newblock Aggregative games and best-reply potentials.
\newblock {\em Economic theory}, 43(1):45--66, 2010.

\bibitem{C1}
Jayash Koshal, Angelia Nedi{\'c}, and Uday~V Shanbhag.
\newblock Distributed algorithms for aggregative games on graphs.
\newblock {\em Operations Research}, 64(3):680--704, 2016.

\bibitem{C9}
Na~Li, Lijun Chen, and Munther~A Dahleh.
\newblock Demand response using linear supply function bidding.
\newblock {\em IEEE Transactions on Smart Grid}, 6(4):1827--1838, 2015.

\bibitem{C8}
Sen Li, Wei Zhang, Jianming Lian, and Karanjit Kalsi.
\newblock Market-based coordination of thermostatically controlled loads—part i: A mechanism design formulation.
\newblock {\em IEEE Transactions on Power Systems}, 31(2):1170--1178, 2015.

\bibitem{maschler2020game}
Michael Maschler, Shmuel Zamir, and Eilon Solan.
\newblock {\em Game theory}.
\newblock Cambridge University Press, 2020.

\bibitem{C24}
Carol~A. Meyers and Andreas~S. Schulz.
\newblock The complexity of welfare maximization in congestion games.
\newblock {\em Networks}, 59(2):252--260, 2012.

\bibitem{monderer1996potential}
Dov Monderer and Lloyd~S Shapley.
\newblock Potential games.
\newblock {\em Games and economic behavior}, 14(1):124--143, 1996.

\bibitem{nesterov2018lectures}
Yurii Nesterov.
\newblock {\em Lectures on convex optimization}, volume 137.
\newblock Springer, 2018.

\bibitem{C23}
Robert~W. Rosenthal.
\newblock A class of games possessing pure-strategy nash equilibria.
\newblock {\em International Journal of Game Theory}, 2:65--67, 1973.

\bibitem{Rudin-Analysis}
Walter Rudin.
\newblock {\em Principles of mathematical analysis}, volume~3.
\newblock McGraw-hill New York, 1976.

\bibitem{slade1994does}
Margaret~E. Slade.
\newblock What does an oligopoly maximize?
\newblock {\em The Journal of Industrial Economics}, pages 45--61, 1994.

\bibitem{Ui}
Takashi Ui.
\newblock A shapley value representation of potential games.
\newblock {\em Games and Economic Behavior}, 31(1):121--135, 2000.

\bibitem{C22}
Mark Voorneveld.
\newblock Best-response potential games.
\newblock {\em Economics letters}, 66(3):289--295, 2000.

\bibitem{C25}
Mark Voorneveld, Peter Borm, Freek Van~Megen, Stef Tijs, and Giovanni Facchini.
\newblock Congestion games and potentials reconsidered.
\newblock {\em International Game Theory Review}, 1(03n04):283--299, 1999.

\bibitem{C20}
Yanan Zhu, Wenwu Yu, Guanghui Wen, and Guanrong Chen.
\newblock Distributed nash equilibrium seeking in an aggregative game on a directed graph.
\newblock {\em IEEE Transactions on Automatic Control}, 66(6):2746--2753, 2020.

\end{thebibliography}
\end{document}